\documentclass[a4paper,UKenglish,cleveref, autoref, thm-restate]{lipics-v2021}

\pdfoutput=1 
\hideLIPIcs  

\usepackage{complexity}
\usepackage{mathtools}
\usepackage{bm}
\usepackage{amsthm}
\usepackage{amssymb}
\usepackage{xspace}
\usepackage{hyperref}   
\usepackage{cleveref}
\usepackage{enumitem}
\usepackage{booktabs}
\usepackage{tikz}
\usepackage[table]{xcolor}
\usetikzlibrary{positioning}

\newcommand{\mdfull}{\textsc{Metric Dimension}}
\newcommand{\gsfull}{\textsc{Geodetic Set}}
\newcommand{\smdfull}{\textsc{Strong Metric Dimension}}

\newcommand{\calO}{\mathcal{O}}

\DeclareMathOperator{\dist}{\mathtt{dist}}

\newcommand{\smd}{\mathtt{smd}}
\newcommand{\vc}{\mathtt{vc}}
\newcommand{\fvs}{\mathtt{fvs}}

\newcommand{\tw}{\mathtt{tw}}
\newcommand{\pw}{\mathtt{pw}}
\newcommand{\diam}{\mathtt{diam}}

\newcommand{\ETH}{\textsf{ETH}}

\newcommand{\g}{\mathtt{g}}
\newcommand{\f}{\mathtt{f}}

\newcommand{\defproblem}[3]{
  \vspace{1mm}
\noindent\fbox{
  \begin{minipage}{0.96\textwidth}
  \begin{tabular*}{\textwidth}{@{\extracolsep{\fill}}lr} #1 \\ \end{tabular*}
  {\bf{Input:}} #2  \\
  {\bf{Question:}} #3
  \end{minipage}
  }
  \vspace{1mm}
}

\title{On the Hardness of Strong Metric Dimension} 

\author{Prafullkumar Tale}{
Indian Institute of Science Education and Research Pune, India \and \url{https://pptale.github.io/}}{prafullkumar@iiserpune.ac.in}{https://orcid.org/0000-0001-9753-0523}{Supported by the ARG-MATRICS Grant MaPLE (ARGM/2026/0089).}

\authorrunning{P. Tale} 

\Copyright{Prafullkumar Tale} 

\ccsdesc[500]{Theory of computation~Parameterized complexity and exact algorithms}

\keywords{Strong Metric Dimension, Constant Diameter, Constant Pathwidth, 
Constant Feedback Vertex Set Number, NP-hardness} 

\category{} 

\relatedversion{} 

\nolinenumbers 

\EventEditors{John Q. Open and Joan R. Access}
\EventNoEds{2}
\EventLongTitle{42nd Conference on Very Important Topics (CVIT 2016)}
\EventShortTitle{CVIT 2016}
\EventAcronym{CVIT}
\EventYear{2016}
\EventDate{December 24--27, 2016}
\EventLocation{Little Whinging, United Kingdom}
\EventLogo{}
\SeriesVolume{42}
\ArticleNo{23}

\begin{document}

\maketitle

\begin{abstract}
Let \(G\) be a connected simple undirected graph.
A vertex \(w\) is said to \emph{strongly resolve} a pair of distinct vertices
\(u, v \in V(G)\) if either there exists an isometric path (i.e.~a shortest path) from 
\(w\) to \(u\)
that contains \(v\), or there exists an isometric path from \(w\) to  \(v\)
that contains \(u\).
A subset \(S \subseteq V(G)\) is said to \emph{strongly resolve} 
\(G\)
if every pair of distinct vertices of \(G\) is strongly resolved
by at least one vertex in \(S\).
In the \textsc{Strong Metric Dimension} problem, the input consists of a graph
\(G\) and a positive integer \(k\), and the objective is to determine whether
there exists a subset \(S \subseteq V(G)\) of size at most \(k\) that strongly
resolves \(G\).
In this article, we show that \textsc{Strong Metric Dimension} is
\NP-complete even on \((i)\) graphs of diameter two, and 
\((ii)\) graphs of constant pathwidth and constant feedback vertex set number.
\end{abstract}

\section{Introduction}

Consider a connected simple undirected graph \(G\), 
and suppose \(\dist(x,y)\) denotes the length of an isometric (i.e.~shortest) path between \(x\) and \(y\).
A vertex \(v \in V(G)\) is said to \emph{distinguish} 
two vertices \(x\) and \(y\) if \(\dist(v,x) \neq \dist(v,y)\).
A set \(S \subseteq V(G)\) is called a \emph{metric generator} or \emph{resolving set} for \(G\)
if every pair of vertices of \(G\) is distinguished by some vertex in \(S\).
The cardinality of a minimum metric generator is called the
\emph{metric dimension} of \(G\).
Although vertices in a metric generator distinguish all
pairs of vertices in a graph, they do not uniquely determine all distances
in the graph.
For example, the two graphs in Figure~\ref{fig:intro-fig} have identical distance vectors with respect to the metric generator
\(\{a,b\}\), 
yet the actual distances between vertices that share identical distance representations differ between the two graphs.
Seb\H{o} and Tannier~\cite{sebo04} observed this, and proved that if the notion of a `metric generator' is replaced by the stronger notion of a
\emph{strong metric generator}, then all distances in the graph can be
uniquely determined.
For more discussion, we refer the reader to the
PhD thesis of Kuziak~\cite{kuziak2014strong}.

A vertex \(w \in V(G)\) is said to \emph{strongly resolve} two distinct vertices
\(u, v \in V(G)\) if \(\dist(w,u) = \dist(w,v) + \dist(v,u)\) or \(\dist(w,v) = \dist(w,u) + \dist(u,v)\),
that is, if there exists an isometric path from \(w\) to \(u\) containing \(v\),
or an isometric path from \(w\) to \(v\) containing \(u\).
A set \(S \subseteq V(G)\) is a \emph{strong metric generator} for a connected
graph \(G\) if every pair of vertices of \(G\) is strongly resolved by some
vertex in \(S\).
The minimum cardinality of a strong metric generator for \(G\) is called the
\emph{strong metric dimension} of \(G\) and is denoted by \(\smd(G)\).
In the \textsc{Strong Metric Dimension} problem, the input is a graph
\(G\) and an integer \(k\), and the objective is to decide whether the strong metric dimension of \(G\) is at most \(k\).

Recently, there has been renewed interest in the complexity of \textsc{Strong Metric Dimension}.
Foucaud et al.~\cite{DBLP:conf/icalp/FoucaudGK0IST24} showed that it is one
of the three \NP-complete problems admitting double-exponential
dependence on the treewidth \((\tw)\) of the input graph.
The other two problems are \textsc{Metric Dimension} and \textsc{Geodetic Set}.
For these two problems, the input is a graph \(G\) and a positive integer
\(k\).
In \textsc{Metric Dimension}, the task is to determine whether
the metric dimension of \(G\) is at most \(k\), whereas in the \textsc{Geodetic Set}\ problem, it is to determine whether there exists a set
\(S \subseteq V(G)\) of size at most \(k\) such that, for every vertex \(u\) in \( V(G)\), there exist two vertices \(s_1, s_2\)
in \(S\) for which an isometric path between \(s_1\) and \(s_2\) contains \(u\).
Foucaud et al.~\cite{DBLP:conf/icalp/FoucaudGK0IST24} proved that, unless the
Exponential Time Hypothesis (\ETH) fails, neither \textsc{Metric Dimension} nor \textsc{Geodetic Set}\ admits
an algorithm running in time
\(2^{2^{o(\tw)}} \cdot n^{\mathcal{O}(1)}\), even on graphs with constant diameters (\(\diam\)),
where \(n\) is the number of vertices.
For \textsc{Strong Metric Dimension}, they obtained even stronger lower bounds, ruling out algorithms
running in time
\(2^{2^{o(\vc)}} \cdot n^{\mathcal{O}(1)}\) under the same assumption, where
\(\vc\) denotes the vertex cover number of the input graph.
They also showed that these bounds are tight by presenting algorithms running
in time
\(2^{\diam^{\mathcal{O}(\tw)}} \cdot n^{\mathcal{O}(1)}\) for \textsc{Metric Dimension} and
\textsc{Geodetic Set}, and
\(2^{2^{\mathcal{O}(\vc)}} \cdot n^{\mathcal{O}(1)}\) for \textsc{Strong Metric Dimension}.

Foucaud et al.~\cite{DBLP:conf/icalp/FoucaudGK0IST24}
motivated the use of larger parameters, such as treewidth plus diameter,
by highlighting the complexity of these problems under smaller
parameterizations.
The problem \textsc{Metric Dimension} is \NP-complete even on graphs of diameter two~\cite{FoucaudMNPV17b},
as well as on graphs of pathwidth \(24\)~\cite{LM21}.
Similarly, \textsc{Geodetic Set}\ is \NP-complete even on graphs of diameter two~\cite{floCALDAM20}.
Recently, it was shown that this problem remains \NP-complete even on graphs
of constant pathwidth and constant feedback vertex set number~\cite{DBLP:conf/iwpec/Tale25}.
However, for \textsc{Strong Metric Dimension}, hardness results parameterized by diameter or treewidth
are not stated explicitly in~\cite{DBLP:conf/icalp/FoucaudGK0IST24}, and to the
best of our knowledge, such results have not appeared in the literature.
See Table~\ref{table:overview} for an overview of the results mentioned above.

\begin{figure}[t]
\centering
\includegraphics[scale=0.6]{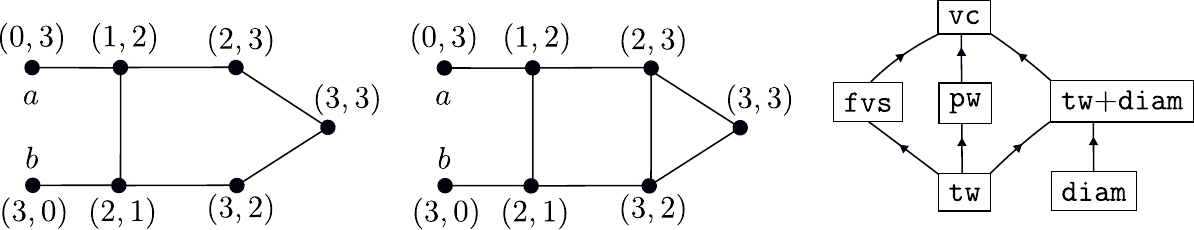}
\caption{(Left and center) Two different graphs with the same metric generators and respective distances. 
(Right) Hierarchy of the parameters mentioned in this article:
vertex cover number \((\vc)\),
feedback vertex set \((\fvs)\),
pathwidth \((\pw)\), treewidth \((\tw)\), and diameter \((\diam)\).\label{fig:intro-fig}}
\vspace{-0.5cm}
\end{figure}

In this article, we address this gap in the literature on \textsc{Strong Metric Dimension}.
A closer inspection of the reduction presented in~\cite{DBLP:conf/icalp/FoucaudGK0IST24}
suggests that \textsc{Strong Metric Dimension}\ is \NP-complete even on graphs of constant diameter.
However, due to the technical nature of the reduction, it is difficult to
determine the exact bound on the diameter.
An examination of the gadgets involved, and their connections indicate that the resulting graphs have diameter
strictly larger than three.
As our first result, we present a significantly simpler reduction that achieves
the best possible bound on the diameter 
when the problem is \NP-complete.

\begin{theorem}\label{thm:smd-np-hard-diam}
\textsc{Strong Metric Dimension} remains \NP-complete even on graphs of diameter two.
\end{theorem}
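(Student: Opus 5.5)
The plan is to reduce from \textsc{Clique}, using the characterization of the strong metric dimension through the \emph{strong resolving graph}. Membership in \NP\ is immediate: given \(S\), we compute all distances by BFS and check every pair \(u,v\) against every \(w \in S\) in polynomial time. For hardness, recall the standard fact due to Oellermann and Peters-Fransen. Call \(u,v\) \emph{mutually maximally distant} (MMD) if no neighbour of \(u\) is farther from \(v\) than \(u\) is, and no neighbour of \(v\) is farther from \(u\) than \(v\) is. Let \(G_{SR}\) be the graph on \(V(G)\) whose edges are the MMD pairs. Then \(\smd(G)=\vc(G_{SR})\). I would either cite this or reprove the two directions briefly:
\begin{itemize}
\item an MMD pair can only be strongly resolved by one of its own endpoints, so \(S\) must cover every MMD pair;
\item conversely, for any pair \(u,v\), extending a geodesic from \(v\) through \(u\) ``away from \(v\)'' as far as possible reaches an MMD pair \((u',v')\). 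An endpoint of that pair lying in a vertex cover then strongly resolves \(u,v\).
\end{itemize}

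The key observation is what \(G_{SR}\) looks like when \(\diam(G)=2\).
\begin{itemize}
\item If \(\dist(u,v)=2\), then \(u,v\) are automatically MMD, since no vertex is at distance \(3\) from anything.
\item If \(uv\in E(G)\), then \(u\) is maximally distant from \(v\) iff no neighbour of \(u\) is at distance \(2\) from \(v\), that is, iff \(N[u]\subseteq N[v]\). So adjacent \(u,v\) are MMD iff \(N[u]=N[v]\), i.e. they are true twins.
\end{itemize}
Hence if \(G\) has diameter two and no true twins, \(G_{SR}=\overline{G}\). Then
\[
\smd(G)=\vc(\overline{G})=|V(G)|-\alpha(\overline{G})=|V(G)|-\omega(G).
\]
So it suffices to show that \textsc{Clique} is \NP-hard on twin-free graphs of diameter two.

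For the construction, start from a \textsc{Clique} instance \((H,k)\) with \(k\ge 3\). We may assume \(H\) has no isolated vertices, since deleting them does not change the answer when \(k\ge 2\). Build \(G\) from \(H\) in two steps:
\begin{itemize}
\item add a universal vertex \(z\);
\item for each \(v\in V(H)\), add a new vertex \(p_v\) adjacent exactly to \(v\) and \(z\).
\end{itemize}
Diameter two holds because every pair of vertices has \(z\) as a common neighbour, or one of them is \(z\).

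Twin-freeness is checked pair by pair.
\begin{itemize}
\item Two vertices \(u,v\in V(H)\) are separated by \(p_u\).
\item \(z\) and any other vertex are separated by some \(p_w\) not adjacent to that vertex; this needs \(|V(H)|\ge 2\).
\item \(p_v\) and \(v\) are separated by a neighbour of \(v\) in \(H\), which exists because \(H\) has no isolated vertices.
\item Distinct \(p_u,p_v\) are nonadjacent, so they cannot be true twins.
\end{itemize}

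For cliques, any clique of \(G\) containing some \(p_v\) lies inside \(\{p_v,v,z\}\) and so has size at most \(3\). Every other clique is a clique of \(H\), possibly together with \(z\). Hence, for \(k\ge 3\), \(G\) has a clique of size \(k+1\) iff \(H\) has a clique of size \(k\). The output instance is therefore \((G,\,|V(G)|-k-1)\), where \(|V(G)|=2|V(H)|+1\).

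I expect the main obstacle to be stating and justifying the equality \(\smd(G)=\vc(G_{SR})\) cleanly. The direction that a vertex cover of \(G_{SR}\) strongly resolves every pair needs the geodesic-extension argument. In diameter two, however, I would try to shortcut it with a direct argument. If \(S\) misses at most one vertex of each non-edge of \(G\), consider a pair \(u,v\notin S\). Such a pair must be adjacent. Since \(u\) and \(v\) are not twins, there is some \(w\) adjacent to exactly one of them, say adjacent to \(u\) but not \(v\). Then \(\dist(w,v)=2\), so \(w\) strongly resolves \(u,v\) via the path \(w,u,v\). Moreover \(wv\) is a non-edge with \(v\notin S\), so \(w\in S\). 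Pairs that meet \(S\) are resolved by their endpoint in \(S\). This makes the whole proof self-contained.
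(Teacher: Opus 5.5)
Your proof is correct, and its core is the same as the paper's. A universal vertex gives diameter two. In a diameter-two graph every non-adjacent pair is mutually maximally distant, while an adjacent pair is so exactly when its two vertices are true twins. Hence for a true-twin-free graph, $G_{SR}$ is the complement of $G$. The paper's Lemma~\ref{lemma:compliment-strong-res-graph} is exactly this statement for $G=\overline{H}+g$: ``no false twins in $H$'' plays the role of your twin-freeness, and ``no isolated vertices in $H$'' ensures that $g$ is isolated in $G_{SR}$.

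You differ from the paper in two places.

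\textbf{How twin-freeness is obtained.} The paper takes the specific graph $H$ produced by the textbook \textsc{3-SAT}-to-\textsc{Vertex Cover} reduction. It observes that this graph already has no isolated vertices and no false twins, so only a single vertex needs to be added to $\overline{H}$. You instead start from an arbitrary \textsc{Clique} instance and force twin-freeness with the gadget vertices $p_v$. This doubles the vertex set and requires an extra check on clique sizes. In exchange, your reduction does not depend on any structural property of the source instance.

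\textbf{How the link to vertex cover is justified.} The paper invokes the general Oellermann--Peters-Fransen characterization (Proposition~\ref{prop:smd-to-vc}). You replace it with a direct diameter-two argument: a non-twin witness $w$ adjacent to exactly one of $u,v$ must lie in $S$, and it resolves the pair along the geodesic $w,u,v$. This makes your proof fully self-contained.

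Overall, the paper's version is shorter, while yours is more generic and more elementary.
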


Next, we investigate whether \textsc{Strong Metric Dimension}\ becomes fixed-parameter tractable when
parameterized by measures smaller than the vertex cover number, such as
pathwidth or the feedback vertex set number.
We resolve this question negatively.

\begin{theorem}\label{thm:np-hardness-fvs}
\textsc{Strong Metric Dimension}\ remains \NP-complete even on graphs of the feedback vertex set number
\(25\) and pathwidth \(27\).
\end{theorem}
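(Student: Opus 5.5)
We prove \Cref{thm:np-hardness-fvs} via the standard equivalence between \textsc{Strong Metric Dimension} and \textsc{Vertex Cover} on the \emph{strong resolving graph}, combined with a reduction from a problem that is already hard on structured instances. Recall that two vertices \(u,v\) are \emph{mutually maximally distant} (MMD) if every neighbour \(u'\) of \(u\) satisfies \(\dist(u',v)\le \dist(u,v)\), and symmetrically for \(v\). The strong resolving graph \(G_{SR}\) has vertex set \(V(G)\), with an edge for every MMD pair. Oellermann and Peters-Fransen showed \(\smd(G)=\vc(G_{SR})\). The key reason is that an MMD pair \(u,v\) can only be strongly resolved by \(u\) or \(v\) itself, while every pair is strongly resolved by some vertex of any vertex cover of \(G_{SR}\), obtained by extending a geodesic until it becomes MMD. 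Hence it suffices to construct graphs \(G\) of bounded \(\fvs\) and \(\pw\) for which \(\vc(G_{SR})\) encodes an \NP-hard problem. Membership in \NP\ is immediate, since strong resolution of a pair is checkable in polynomial time.

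The plan is to reduce from a variant of \textsc{3-SAT} (or \textsc{Vertex Cover} itself), following the template used for \textsc{Geodetic Set} in~\cite{DBLP:conf/iwpec/Tale25}. We keep a constant-size \emph{central} set \(X\) of vertices, so that deleting \(X\) leaves a forest of long paths or caterpillars; this bounds \(\fvs\) by \(|X|\) and \(\pw\) by \(|X|+O(1)\), which matches the constants \(25\) and \(27\). Each variable becomes a path gadget whose two endpoints (or designated pendant-like vertices) form an MMD pair, which forces one of them into the solution and thereby encodes the truth value. Each clause becomes a gadget whose distinguished vertex is MMD exactly with the literal-vertices of its literals. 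The MMD relations are engineered through distances to \(X\): vertices of \(X\) are attached to gadgets by subdivided paths of carefully chosen lengths, placing gadget vertices in "layers", so that the distance between two gadget vertices always routes through \(X\) and equals the sum of their layer indices. With this, a pair is MMD exactly when both vertices are local maxima of their layer, and extra short connections break unwanted MMD pairs. The intended outcome is that \(G_{SR}\) consists of forced edges (simplicial or twin vertices that are MMD with everything, absorbed into the budget \(k\)) plus a copy of the standard \textsc{Vertex Cover} or \textsc{3-SAT} encoding graph. The budget \(k\) is then set to the forced part plus the number of variables plus the clause contribution.

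The main obstacle is controlling \emph{all} MMD pairs in \(G\), not just the intended ones. Long paths create many local maxima, and spurious MMD pairs add edges to \(G_{SR}\) that could break the correspondence. I would first compute, for each vertex, its distance profile to \(X\), and prove a lemma that every shortest path between vertices in different gadgets passes through \(X\). From that lemma, MMD pairs can be characterized via local maximality of the distance-to-\(X\) vector. The remaining steps are then routine: verify the forward direction (satisfying assignment \(\Rightarrow\) vertex cover of \(G_{SR}\) of size \(k\)) and the backward direction (vertex cover of size \(k\) \(\Rightarrow\) one endpoint per variable pair, with every clause hit). The constants come from counting \(X\) and adding the width of the path decomposition of the remaining caterpillars.
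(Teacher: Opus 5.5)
Your proposal is a template rather than a proof. The graph \(G\), the central set, the path lengths and the budget are never fixed. So neither the MMD characterization nor the constants \(25\) and \(27\) can be checked, and the steps you call routine are where all the content lies.

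The one concrete mechanism you describe also cannot work. Suppose inter-gadget distances are \(\dist(u,x)+\dist(x,v)\) through a hub, and a pair is MMD exactly when each vertex is a local maximum of its own layer. Then being MMD is a property of \(u\) and of \(v\) separately, so all qualifying vertices in different gadgets become pairwise adjacent in \(G_{SR}\). A sparse, index-specific relation such as ``\(x_{i,t}\) is MMD with \(x_{i,f}\) but not with \(x_{j,f}\) for \(j\neq i\)'' cannot arise this way. ``Extra short connections'' is not a mechanism compatible with a constant-size deletion set. Your clause gadget is also wrong as stated: a cover may omit a single clause vertex only if all three of its literal neighbours are in the cover. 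That is an AND constraint, not a disjunction. You need the standard triangle of occurrence vertices, each joined to its literal vertex.

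The missing idea is a pair-dependent tie between two competing routes, obtained by taking a minimum over two hubs. In the paper each critical vertex \(u\) has exactly two neighbours, \(\g(u)\) and \(\f(u)\).
\begin{itemize}
\item \(\g(u)\) is joined to a semi-global vertex \(g\) by a path of length \(N^2\), so any two critical vertices are within distance \(2N^2+2\).
\item \(\f(u)\) is joined to the two vertices \(\alpha,\beta\) of a portal by paths of lengths \(N^2+iN\) and \(N^2-iN\), with the signs reversed for the partner part.
\end{itemize}
The shorter portal route between indices \(i\) and \(j\) then has length \(2N^2+2-|i-j|N\). This ties with the route via \(g\) exactly when \(i=j\).
\begin{itemize}
\item The tie puts both \(\g(v)\) and \(\f(v)\) at distance \(2N^2+1\) from \(u\), so \(v\) is maximally distant from \(u\).
\item A strict inequality makes \(\g(v)\) farther from \(u\) than \(v\) is, so \(v\) is not maximally distant from \(u\).
\end{itemize}
With \(N=(n+m)^2\), routes through two portals are too long to matter. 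Inside one part, the portal route has length \(2N^2+2-(i+j)N\), which is automatically too short, so each part is independent in \(G_{SR}\).

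This is why the source problem is \textsc{Exact-3-Partitioned-3-SAT}. The standard \textsc{Vertex Cover} graph \(H\) then splits into \(12\) independent parts whose inter-part edges are all equal-index relations: \(i=j\) for literal/literal and literal/occurrence edges, and \(\lambda=\lambda'\) for occurrence/occurrence edges. Hence \(12\) portals plus \(g\) suffice, giving \(\fvs\le 25\). Since the remainder is a forest of subdivided stars, \(\pw\le 25+2\).

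Finally, the spurious MMD pairs you worry about are eliminated by attaching a pendant to every non-critical vertex. Support vertices then become isolated in \(G_{SR}\), and the pendants form a clique costing exactly \(|V_p|-1\): omit the pendant of \(g\), which is MMD with no critical vertex. One gets \(G_{SR}[V_c]=H\) with \(k=3n+2m+|V_p|-1\).
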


The next natural question is whether \textsc{Strong Metric Dimension}\ is fixed-parameter tractable when parameterized by treewidth plus diameter.
A simple application of Courcelle's 
theorem~\cite{Courcelle90} implies that this is indeed the case.
Although Monadic Second-Order (MSO$_2$) logic cannot express unbounded shortest-path distances,
for any fixed distance $d \leq \diam(G)$, the property that the exact shortest path 
distance between two vertices $u$ and $v$ is $d$ can be expressed in first-order logic by existentially quantifying over paths of length up to $d$. 
Consequently, for a set \(S \subseteq V(G)\), the condition that a vertex $w \in S$ strongly resolves the pair $(u, v)$, which requires
encoding either 
$\dist(w, u) = \dist(w, v) + \dist(v, u)$ or 
$\dist(w, v) = \dist(w, u) + \dist(u, v)$, 
can be written as an MSO$_2$ formula 
$\Phi(S)$, whose length depends only on $\diam$.
By Courcelle's Theorem, the minimum cardinality of a set $S$ satisfying $\Phi(S)$ can be computed in time $O(f(|\Phi|, \tw(G)) \cdot n)$. 
This yields an FPT algorithm parameterized by $\tw(G) + \diam(G)$.
An interesting question is to determine whether \smdfull\ admits an
algorithm running in time $2^{\diam^{\mathcal{O}(\tw)}}\cdot n^{\calO(1)}$, similar to the known
algorithms for \mdfull\ and \gsfull\
in~\cite{DBLP:conf/icalp/FoucaudGK0IST24}.

\begin{table}[t]
    \centering
    \renewcommand{\arraystretch}{1.5}
    \begin{tabular}{p{0.13\linewidth}|p{0.16\linewidth}|p{0.15\linewidth}|p{0.1\linewidth}|p{0.1\linewidth}|p{0.12\linewidth}|}
        \toprule
        & \(\tw\) + \(\diam\) & \(\vc\) & \(\diam\) &  \(\fvs+\pw\) & \(\tw\)   \\
        \midrule
        \textsc{Metric} \newline \textsc{Dimension} & \(2^{\diam^{\calO(\tw)}}\) \newline
        No \(2^{\diam^{o(\tw)}}\)~\cite{DBLP:conf/icalp/FoucaudGK0IST24}  & \(2^{\calO(\vc^2)}\) \newline 
        No \(2^{o(\vc^2)}\)~\cite{DBLP:conf/stacs/FoucaudGK0IST25} &
        \textsf{paraNP}-hard~\cite{FoucaudMNPV17b}
         & \W[1]-hard~\cite{GKIST23} \newline  &  \textsf{paraNP}-hard~\cite{LM21}
        \\
        \midrule
        \textsc{Geodetic Set} & \(2^{\diam^{\calO(\tw)}}\) \newline
        No \(2^{\diam^{o(\tw)}}\)~\cite{DBLP:conf/icalp/FoucaudGK0IST24} & \(2^{\calO(\vc^2)}\) \newline 
        No \(2^{o(\vc^2)}\)~\cite{DBLP:conf/stacs/FoucaudGK0IST25} & \textsf{paraNP}-hard~\cite{floCALDAM20} & \multicolumn{2}{c|}{\textsf{paraNP}-hard~\cite{DBLP:conf/iwpec/Tale25}} \\
        \midrule
        \textsc{Strong Metric Dimension} & \FPT & \(2^{2^{\calO(\vc)}}\) \newline
        No \(2^{2^{o(\vc)}}\)~\cite{DBLP:conf/icalp/FoucaudGK0IST24} & \textsf{paraNP}-hard (Thm~\ref{thm:smd-np-hard-diam}) & \multicolumn{2}{c|}{\textsf{paraNP}-hard (Thm~\ref{thm:np-hardness-fvs})} \\
        \bottomrule
    \end{tabular}
    \caption{Structural parameterized complexity of the three problems mentioned in the introduction.
    We omit the polynomial factors in the running time.
    All the lower bounds are based on~\textsf{ETH}.\label{table:overview}}
    \vspace{-0.75cm}
\end{table}

\subparagraph*{Related Work}
The computational complexity of determining 
the strong metric dimension of a graph has been studied in the literature.
In the introductory paper, Seb\H{o} and Tannier~\cite{sebo04} used
the \textsc{Strong Metric Dimension}
problem to design an efficient algorithm for \textsc{Connected Join Existence}.
Oellermann and Peters-Fransen~\cite{DBLP:journals/dam/OellermannP07}
showed an instance \((G,k)\) of \textsc{Strong Metric Dimension}\ 
can be reduced in polynomial time to an instance \((G_{SR},k)\) of
\textsc{Vertex Cover}, where \(G_{SR}\) is called a \emph{strong resolving graph}
of \(G\) (see Definition~\ref{def:strong-resolving-graph}).
Consequently, many algorithmic and hardness results known for
\textsc{Vertex Cover} carry over to \textsc{Strong Metric Dimension}~\cite{DBLP:journals/dam/KuziakPRY18}.
In particular, \textsc{Strong Metric Dimension}\ is \NP-complete but fixed-parameter tractable when
parameterized by the solution size.
More involved conditional lower bounds and hardness results for
\textsc{Vertex Cover} also transfer to \textsc{Strong Metric Dimension}~\cite{DBLP:journals/dam/DasGuptaM17}.

We remark that results on structural parameterizations for
\textsc{Vertex Cover} on the strong resolving graph \(G_{SR}\) of \(G\) do not directly
translate to results for \textsc{Strong Metric Dimension}\ on the original graph \(G\).
This is because the strong resolving graph \(G_{SR}\) of a graph \(G\) does not
necessarily preserve the structural properties of \(G\).
For instance, if \(G\) is a star with \(n\) leaves, then \(G_{SR}\) is a complete
graph on \(n\) vertices together with an isolated vertex.
In this case, the treewidth of \(G_{SR}\) is significantly larger than that of
\(G\).
Therefore, even though \textsc{Vertex Cover} is fixed-parameter tractable when
parameterized by treewidth, this does not yield an \FPT\ algorithm
for \textsc{Strong Metric Dimension} parameterized by treewidth.
Consequently, the methods used in~\cite{DBLP:journals/dam/DasGuptaM17}
to translate conditional lower bounds
are not applicable when we consider structural parameters.

\section{Preliminaries}
\label{sec:appendix-prelims}

We use standard graph-theoretic notation, and we refer the 
reader to~\cite{D12} for any undefined notation. 
For an undirected graph $G$, sets $V(G)$ and $E(G)$ denote its set of vertices and edges, respectively.
Two vertices $u,v\in V(G)$ are {\it adjacent} or {\it neighbors} if $uv\in E(G)$. 
The {\it open neighborhood} of a vertex $u\in V(G)$, denoted by $N(u):=N_G(u)$, is the set of vertices that are neighbors of $u$. 
The {\it closed neighborhood} of a vertex $u\in V(G)$ is denoted by $N[u]:=N_G[u]:=N_G(u)\cup \{u\}$.
For any $u,v \in V(G)$, we say that $u$ is connected to $v$ by a path $P$ of length $\ell$ if $P= w_0w_1\ldots w_\ell$, where $w_0 = u$ and $v = w_\ell$.
The {\it distance} between two vertices $u,v\in V(G)$ in $G$, denoted by $\dist(u,v):=\dist_G(u,v)$, is the length of a 
$(u,v)$-shortest path in $G$. 
For a subset $S$ of $V(G)$, we define $N[S] = \bigcup_{v \in S} N[v]$ and $N(S) = N[S] \setminus S$.
For a subset $S$ of $V(G)$, we denote the graph obtained by deleting $S$ from $G$ by $G - S$.
The {\it complement} of a graph $G$ is a graph $H$ with the same vertex set, and such that any two vertices $u,v\in V(G)$ are adjacent in $H$ if and only if they are not adjacent in $G$.


\begin{definition}
\label{def:mutually-max-distant}
Given a graph $G$, we say a vertex $u\in V(G)$ is maximally distant from $v\in V(G)$ if
there is no $x \in V(G) \setminus \{u\}$ such that a shortest path between $x$ and $v$ contains $u$.
Formally, for every 
$y \in N(u)$, we have
$\dist(y, v) \le \dist(u, v)$.
\end{definition}
Consider an auxiliary graph $G_{SR}$
of $G$ defined as follows.
\begin{definition}
\label{def:strong-resolving-graph}
Given a connected graph $G$, the  \emph{strong resolving graph} of $G$,  denoted by $G_{SR}$, has vertex set 
$V(G)$ and two vertices $u, v$ are adjacent if and only if
$u$ and $v$ are mutually maximally  distant in $G$.
\end{definition}
For any two mutually maximally distant
vertices in $G$, there is no vertex in $G$ 
that strongly resolves them, except themselves. 
Hence, if $u$ and $v$ are mutually maximally  distant in $G$, 
then, for any strong resolving set $S$ of 
$G$, at least one of $u$ or $v$ is in $S$,
i.e., $|\{u, v\}\cap S|\geq 1$.
Oellermann and Peters-Fransen~\cite[Theorem 2.1]{DBLP:journals/dam/OellermannP07} showed that
this necessary condition is also sufficient.

\begin{proposition}[\cite{DBLP:journals/dam/OellermannP07}]
\label{prop:smd-vc}
For any connected graph \(G\), we have \(\smd(G) = \vc(G_{SR})\).
\label{prop:smd-to-vc}
\end{proposition}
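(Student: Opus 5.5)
The plan is to prove the two inequalities separately. For \(\smd(G) \ge \vc(G_{SR})\) we use the observation stated just before the proposition. If \(u\) and \(v\) are mutually maximally distant, no vertex other than \(u\) or \(v\) strongly resolves them. Hence every strong metric generator meets every edge of \(G_{SR}\), i.e., it is a vertex cover of \(G_{SR}\).

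For the reverse inequality, fix a vertex cover \(S\) of \(G_{SR}\) and a pair of distinct vertices \(u,v\). I will show that \(S\) contains a vertex strongly resolving \(u,v\). The key lemma is an extension statement: there exist vertices \(x,y\) that are mutually maximally distant and admit a shortest \(x\)--\(y\) path passing through \(u\) and then \(v\), in the order \(x,\dots,u,\dots,v,\dots,y\). Here \(x=u\) or \(y=v\) is allowed. Given this, \(xy\) is an edge of \(G_{SR}\) since \(x\neq y\), so without loss of generality \(x\in S\). Because \(u\) and \(v\) lie in this order on a shortest \(x\)--\(y\) path, \(\dist(x,v)=\dist(x,u)+\dist(u,v)\), so \(x\) strongly resolves \(u,v\). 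The case \(y\in S\) is symmetric.

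To prove the lemma, maintain a pair \((x,y)\) with the invariant that some geodesic from \(x\) to \(y\) visits \(u\) and then \(v\). Start with \(x=u\), \(y=v\). If \(x\) is not maximally distant from \(y\), choose \(x'\) maximizing \(\dist(x',y)\) among all vertices \(x'\) such that \(x\) lies on a shortest \(y\)--\(x'\) path.
\begin{itemize}
\item \(x'\) is maximally distant from \(y\). If a neighbour \(z\) of \(x'\) had \(\dist(z,y)>\dist(x',y)\), then \(x'\), and hence \(x\), would lie on a shortest \(y\)--\(z\) path, contradicting maximality.
\item The invariant is preserved. Concatenate a shortest \(x'\)--\(x\) path with the old geodesic from \(x\) to \(y\). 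Its length is \(\dist(x',x)+\dist(x,y)=\dist(x',y)\), so it is a geodesic through \(x,u,v\) in order.
\end{itemize}
Then do the symmetric step at the \(y\)-end, and alternate. Each nontrivial step strictly increases \(\dist(x,y)\), which is bounded by the diameter, so the process terminates. At termination each endpoint is maximally distant from the other, which proves the lemma.

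The main obstacle is getting this alternating extension right. The point needing care is that after \(x\) is replaced, \(y\) may stop being maximally distant from the new \(x\). So a single extension on each side does not suffice, and termination must come from the strictly increasing distance. The geodesic-concatenation step is what guarantees that \(u\) and \(v\) remain on a common shortest path between the current endpoints, in the correct order.
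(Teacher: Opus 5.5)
Your proof is correct. It also supplies something the paper does not: the paper argues only the easy direction and defers sufficiency entirely to Oellermann and Peters-Fransen. Your first paragraph matches that easy direction: a mutually maximally distant pair can be strongly resolved only by one of its own endpoints, so every strong metric generator is a vertex cover of \(G_{SR}\).

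Your extension lemma gives a clean, self-contained proof of the converse. The lemma says that every ordered pair \(u,v\) lies, in that order, on a geodesic between some mutually maximally distant pair \(x,y\). Both supporting arguments are sound: the one showing the chosen \(x'\) is maximally distant from \(y\), and the concatenation argument preserving the invariant. The paper's citation keeps the preliminaries short. Your route makes the section independent of the reference and isolates the one structural fact the converse needs.

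One correction to your closing remark: a single extension at each end does suffice. After the first step, \(x'\) is maximally distant from \(y\). Now let \(y'\) be any vertex with \(y\) on a shortest \(x'\)--\(y'\) path, and suppose some \(z \in N(x')\) had \(\dist(z,y') = \dist(x',y') + 1\). Then \(\dist(z,y) \ge \dist(z,y') - \dist(y,y') = \dist(x',y) + 1\), contradicting that \(x'\) is maximally distant from \(y\). So \(x'\) stays maximally distant from \(y'\), and the alternation stops after at most two nontrivial steps. Your potential argument, based on strictly increasing \(\dist(x,y)\), is still valid; it just is not needed.
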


A set of vertices $Y$ is said to be {an} \emph{independent set} if no two vertices in $Y$ are adjacent.
For a graph $G$, a set $X \subseteq V(G)$ is said to be {a} \emph{vertex cover} if $V(G) \setminus X$ is an independent set.
A vertex cover $X$ is a \emph{minimum vertex cover} if for any other vertex cover $Y$ of $G$, we have $|X| \le |Y|$.
\emph{The vertex cover number} of graph $G$ is the size of {a} minimum vertex cover of {a graph} $G$.
For a graph $G$, a set $X \subseteq V(G)$ is said to be {a} \emph{feedback vertex set} if $V(G) \setminus X$ is an acyclic graph.
We define the notation of \emph{the feedback vertex set number} in the analogous way.
A \emph{path decomposition} of $G$ is a sequence of subsets
$\mathcal{P} = (X_1, X_2, \dots, X_r)$ of $V$ such that:
\((i)\) $\displaystyle \bigcup_{i=1}^{r} X_i = V$.
\((ii)\) For every edge $uv \in E$, there exists an index $i \in [r]$ such that  $\{u,v\} \subseteq X_i$.
\((iii)\) For every vertex $v \in V$, the set $\{\, i \mid v \in X_i \,\}$ forms a contiguous interval of $\{1,\dots,r\}$.
The sets $X_1,\dots,X_r$ are called \emph{bags}.
The \emph{width} of a path decomposition 
$\mathcal{P} = (X_1,\dots,X_r)$ is \(\max_{1 \le i \le r} |X_i| - 1\).
The \emph{pathwidth} of a graph $G$, denoted $\pw(G)$, is
the minimum width over all path decompositions of $G$.
For details on parameterized complexity and related terminologies, we refer the reader to the recent book by Cygan et al.~\cite{DBLP:books/sp/CyganFKLMPPS15}.

\section{NP-hardness on Graphs of Constant Diameter}\label{sec:np-hardness-diam}

In this section, we prove that \textsc{Strong Metric Dimension}\ is \NP-complete even when restricted to graphs of diameter \(2\).
We begin with the standard polynomial-time reduction that, given an instance \(\varphi\) of \textsc{3-SAT}, 
constructs an equivalent instance \((H,k)\) of \textsc{Vertex Cover}.
Next, we construct a graph \(G\) by taking the complement graph \(\overline{H}\) 
and adding a new global vertex \(g\) that is adjacent to every vertex of \(\overline{H}\).
We then show that the strong resolving graph \(G_{SR}\) of \(G\) is isomorphic to \(H\), 
up to the presence of the additional vertex \(g\), which does not affect the vertex 
cover number of \(G_{SR}\) as \(g\) is an isolated vertex in \(G_{SR}\).
Observe that \(G\) is connected and has diameter \(2\), since every pair of 
non-adjacent vertices in \(G\) has a common neighbor \(g\).
Finally, by Proposition~\ref{prop:smd-to-vc}, we have \(\smd(G) = \vc(G_{SR})\).
Since \(G_{SR}\) is isomorphic to \(H\) (modulo the vertex \(g\)), it follows that \(\smd(G) = \vc(H)\).
Consequently, the instance \((H,k)\) of \textsc{Vertex Cover} is a yes-instance if 
and only if \((G,k)\) is a yes-instance of \textsc{Strong Metric Dimension}, establishing the correctness of the reduction.

Recall the reduction that takes as input 
an instance \(\varphi\) of \textsc{3-SAT} with \(n\) variables and \(m\) clauses, and constructs graph \(H\) as follows:
\((i)\) For each variable \(x_i\), it creates two vertices labeled \(x_i\) and
\(\neg {x_i}\), and add an edge between them.
\((ii)\) For each clause \(C_j = \langle \ell_{j,1} \lor \ell_{j,2} \lor \ell_{j,3} \rangle\),
creates a triangle with vertices labeled \(\ell_{j,1}, \ell_{j,2}, \ell_{j,3}\).
\((iii)\) For each occurrence of a literal \(\ell\) in a clause \(C_j\),
add an edge between the clause vertex labeled \(\ell\) and 
the corresponding variable vertex \(x_i\) or \(\neg{x_i}\).
This completes the construction of \(H\).
The reduction returns \((H, k)\), where \(k = n + 2m\), as an instance of 
\textsc{Vertex Cover}.

The correctness of the reduction is proved in~\cite[Section~3.1.3]{DBLP:books/fm/GareyJ79}.
At this stage, we highlight that \(H\) has no isolated vertex or false twins.
The following lemma is useful to construct an instance \((G, k)\) of \textsc{Strong Metric Dimension}.

\begin{lemma}\label{lemma:compliment-strong-res-graph}
Let \(H\) be a graph on at least three vertices 
with no isolated vertices and no false twins.
Let \(G\) be the graph obtained by adding a global vertex \(g\) to the
complement graph \(\overline{H}\).
Then, \(H\) and \(G_{SR} - \{g\}\) are identical graphs.
\end{lemma}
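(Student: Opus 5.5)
Since every two non-adjacent vertices of \(G\) have the common neighbour \(g\), the graph \(G\) has diameter at most two. Hence for distinct \(u,v\in V(H)\) we have \(\dist_G(u,v)=1\) if \(uv\notin E(H)\) and \(\dist_G(u,v)=2\) if \(uv\in E(H)\). The plan is to use this to characterise, for \(u,v \in V(G)\setminus\{g\}\), when \(u\) is maximally distant from \(v\) in the sense of Definition~\ref{def:mutually-max-distant}. We then check that mutual maximal distance is exactly adjacency in \(H\).

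\textbf{Pairs adjacent in \(H\).} Suppose \(uv\in E(H)\), so \(\dist_G(u,v)=2\). Since the diameter is at most two, every neighbour \(y\) of \(u\) satisfies \(\dist_G(y,v)\le 2=\dist_G(u,v)\). So \(u\) is maximally distant from \(v\), and by symmetry \(v\) is maximally distant from \(u\). Thus \(uv\) is an edge of \(G_{SR}\).

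\textbf{Pairs non-adjacent in \(H\).} Suppose \(u\ne v\) and \(uv\notin E(H)\), so \(uv\in E(G)\) and \(\dist_G(u,v)=1\). We must exhibit a neighbour of \(u\) at distance \(2\) from \(v\), or a neighbour of \(v\) at distance \(2\) from \(u\). Such a vertex \(y\neq g\) is one with \(yu\notin E(H)\) and \(yv\in E(H)\), or the symmetric situation. Suppose no such \(y\) exists. Then every \(H\)-neighbour of \(v\) is an \(H\)-neighbour of \(u\), and vice versa, so \(N_H(u)=N_H(v)\). Since \(u\) and \(v\) are non-adjacent in \(H\), they are false twins in \(H\), a contradiction. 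Hence one of them is not maximally distant from the other, and \(uv\notin E(G_{SR})\).

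This is the key step, and it is where the no-false-twins hypothesis enters. The no-isolated-vertices hypothesis and \(|V(H)|\ge 3\) are needed mainly to isolate \(g\) in \(G_{SR}\), which the surrounding argument uses. For that, I would note that for any \(v\neq g\), the vertex \(g\) is not maximally distant from \(v\). Indeed, \(v\) has some \(H\)-neighbour \(w\), so \(\dist_G(w,v)=2>1=\dist_G(g,v)\) while \(w\in N_G(g)\). Since \(G_{SR}-\{g\}\) removes \(g\), this is only a side remark.

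Combining the two cases, the vertex sets agree and the edge sets coincide, so \(H\) and \(G_{SR}-\{g\}\) are identical graphs.
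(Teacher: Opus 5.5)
Your proof is correct and follows essentially the paper's argument: a pair adjacent in \(H\) is at distance two in the diameter-two graph \(G\), so it is mutually maximally distant, while for a pair adjacent in \(G\), mutual maximal distance would force \(u\) and \(v\) to be false twins in \(H\). The paper phrases the second case via \(N_G[u]=N_G[v]\), whereas you argue directly with \(H\)-neighbourhoods; your side remark that \(g\) is isolated in \(G_{SR}\), which uses the no-isolated-vertices hypothesis, justifies a fact the paper only asserts.
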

\begin{proof}
By the definition, it is easy to see that \(V(H) = V(G_{SR} \setminus \{g\})\).
It remains to prove that \(E(H) = E(G_{SR})\).

\((\Rightarrow)\)
Let \(uv \in E(H)\).
By definition, the vertices \(u\) and \(v\) are non-adjacent in \(\overline{H}\),
and consequently also non-adjacent in \(G\).
Since \(G\) contains a global vertex \(g\), the distance between any two
vertices in \(G\) is at most two.
In particular, \(d_G(u,v)=2\).
For every vertex \(y \in N_G(u)\), we have
\(d_G(y,v) \le 2 = d_G(u,v)\).
By symmetry, for every vertex \(x \in N_G(v)\), we have
\(d_G(x,u) \le 2 = d_G(u,v)\).
Thus, \(u\) and \(v\) are mutually maximally distant in \(G\).
By Definition~\ref{def:strong-resolving-graph}, this implies that
\(uv \in E(G_{SR})\).
Since \(uv\) was an arbitrary edge of \(H\), we obtain
\(E(H) \subseteq E(G_{SR})\).

\((\Leftarrow)\)
Let \(uv \in E(G_{SR})\).
Since \(g\) is an isolated vertex in \(G_{SR}\), it follows that \(u \neq g\) and \(v \neq g\).
By the definition of the strong resolving graph, the vertices \(u\) and \(v\) are mutually maximally distant in \(G\).
We distinguish two cases, depending on whether \(u\) and \(v\) are adjacent in \(G\) or not.
In the first case, since \(u\) and \(v\) are mutually maximally distant, for every \(y \in N_G(u)\) we have
\(\dist_G(y,v) \le \dist_G(u,v) = 1\).
Hence, every neighbor \(y\) of \(u\) is adjacent to \(v\), which implies \(N_G(u) \subseteq N_G[v]\).
By symmetry, we also obtain \(N_G(v) \subseteq N_G[u]\). Consequently, \(N_G[u] = N_G[v]\).
Recall that \(G\) is obtained from \(\overline{H}\) by adding the global vertex \(g\).
Since \(u \neq g\) and \(v \neq g\), the equality \(N_G[u] = N_G[v]\) implies that
\(u\) and \(v\) have identical neighborhoods in \(\overline{H}\), and therefore also in \(H\).
Thus, \(u\) and \(v\) are false twins in \(H\), contradicting the assumption that
\(H\) contains no false twins. Hence, this case cannot occur.
In the second case, since \(G\) contains a global vertex \(g\), the distance between 
any two distinct non-adjacent vertices in \(G\) is exactly \(2\).
Since \(G = \overline{H}\) together with the global vertex \(g\), the non-adjacency of 
\(u\) and \(v\) in \(G\) implies that they are adjacent in \(H\).
Thus, \(uv \in E(H)\).
Since \(uv\) was an arbitrary edge of \(G_{SR}\), we conclude that
\(E(G_{SR}) \subseteq E(H)\).
This completes the proof of the lemma.
\end{proof}

\begin{proof}[Proof of Theorem~\ref{thm:smd-np-hard-diam}]
It is straightforward to verify that the problem belongs to \textsc{NP}.
We describe a polynomial-time reduction from \textsc{3-SAT}.
Given an instance \(\varphi\) of \textsc{3-SAT}, it first constructs an instance
\((H,k)\) of \textsc{Vertex Cover}.
Next, it constructs a graph \(G\) by adding a single global vertex \(g\) to the
complement graph \(\overline{H}\).
The resulting instance \((G,k)\) is returned as an instance of \textsc{Strong Metric Dimension}.

The correctness of the first step of the reduction follows from the standard
reduction from \textsc{3-SAT} to \textsc{Vertex Cover}
(see~\cite[Section~3.1.3]{DBLP:books/fm/GareyJ79}).
The correctness of the second step follows from
Lemma~\ref{lemma:compliment-strong-res-graph}, which implies that
\(H\) and \(G_{SR} - \{g\}\) are identical graphs, together with 
Proposition~\ref{prop:smd-to-vc}, we have \(\smd(G) = \vc(G_{SR}) = \vc(H)\).
\end{proof}
\section{NP-hardness on Graphs of Constant Pathwidth and Feedback Vertex Set Number}\label{sec:np-hardness-fvs}

In this section, we prove Theorem~\ref{thm:np-hardness-fvs}.  
To this end, we present a reduction from a variation of \textsc{3-SAT},
called \textsc{3-Partitioned-3-SAT}
which was shown to be \NP-complete by Lampis et al.~\cite{DBLP:journals/siamdm/LampisMV25}.
In fact, we use a more restricted version of the problem, called \textsc{Exact-3-Partitioned-3-SAT},
which was shown to be \NP-complete by Foucaud et al.~\cite{DBLP:conf/icalp/FoucaudGK0IST24}.

\defproblem{\textsc{Exact-3-Partitioned-3-SAT}}{A formula $\psi$ in $3$-\textsc{CNF} form,
together with a partition of its variables into three
disjoint sets $X$, $Y$, and $Z$,
such that $|X| = |Y| = |Z| = n$, and
every clause contains exactly one variable from each of
$X$, $Y$, and $Z$.}{Determine whether $\psi$ is satisfiable.}

In Section~\ref{subsec:3SAT-to-SMD}, 
we reduce an instance \((\psi, \langle X, Y, Z\rangle)\) of \textsc{Exact-3-Partitioned-3-SAT} 
(with \(3n\) variables and \(m\) clauses) to an instance \((G,k)\) of \textsc{Strong Metric Dimension}. 
We establish the correctness of this reduction (Lemma~\ref{lemma:3SAT-to-SMD}) using an unconventional proof strategy, as outlined in Figure~\ref{fig:overview-strategy}. 

Rather than proving the correctness
of the reduction directly, 
we introduce a parallel reduction in Section~\ref{subsec:3SAT-to-VC}. 
Here, we map the same \textsc{Exact-3-Partitioned-3-SAT} instance to 
an equivalent \textsc{Vertex Cover} instance \((H, 3n + 2m)\). 
This adapts a standard reduction from Section~\ref{sec:np-hardness-diam}, 
making its correctness (Lemma~\ref{lemma:Ex-3-Par-3-SAT-to-VC}) straightforward to verify. 
Next, in Section~\ref{subsec:VC-equiv}, we construct the strong resolving graph \(G_{SR}\) of \(G\). 
By Proposition~\ref{prop:smd-vc}, solving \textsc{Strong Metric Dimension} on \(G\) is equivalent to solving \textsc{Vertex Cover} on \(G_{SR}\). 
The crux of our entire argument rests on a final technical lemma (Lemma~\ref{lemma:vertex-cover-two-graphs}), which bridges these parallel tracks by proving that \((G_{SR}, k)\) is a yes-instance of \textsc{Vertex Cover} if and only if \((H, 3n + 2m)\) is a yes-instance.

Combining these equivalences, we obtain that \((\psi, \langle X, Y, Z\rangle)\) is a 
yes-instance of \textsc{Exact-3-Partitioned-3-SAT} if and only if \((H, 3n + 2m)\) 
is a yes-instance of \textsc{Vertex Cover}, which holds if and only if 
\((G_{SR}, k)\) is a yes-instance of \textsc{Vertex Cover}. 
By Proposition~\ref{prop:smd-to-vc}, we have \(\smd(G) = \vc(G_{SR})\). 
Consequently, \((\psi, \langle X, Y, Z\rangle)\) is a yes-instance of \textsc{Exact-3-Partitioned-3-SAT} 
if and only if \((G, k)\) is a yes-instance of \textsc{Strong Metric Dimension}.

\begin{figure}[t]
\centering
\begin{tikzpicture}[
    box/.style={
        draw,
        rectangle,
        minimum width=2cm,
        minimum height=1cm,
        align=center
    },
    line/.style={
        draw,
        thick
    }
]

\node[box] (A) {%
Instance  $(\psi, \langle X, Y, Z\rangle)$ of \\
\textsc{Exact-3-Partitioned-3-SAT}
};

\node[box, right=2.7cm of A] (B) {%
Instance $(G,k)$ of\\
\textsc{Strong Metric Dimension}
};

\node[box, below=1.3cm of A] (C) {%
Instance $(H, 3n + 2m)$ of\\
\textsc{Vertex Cover}.
};

\node[box, below=1.3cm of B] (D) {%
Instance $(G_{SR}, k)$ of\\
\textsc{Vertex Cover}
};

\draw[line, <->, thick] (A) -- 
    node[midway, above]{Section~\ref{subsec:3SAT-to-SMD}} 
    node[midway, below]{Lemma~\ref{lemma:3SAT-to-SMD}} 
    (B);

\draw[line, <->, thick,dashed] (A) -- 
    node[midway, left, text width=1.6cm,
    align=left]{%
    Section~\ref{subsec:3SAT-to-VC}\\
    Lemma~\ref{lemma:Ex-3-Par-3-SAT-to-VC}}(C);

\draw[line, <->, thick,dashed] (B) -- 
    node[midway, right, text width=1.6cm,
    align=left]{%
    Definition~\ref{def:strong-resolving-graph}\\
    Proposition~\ref{prop:smd-vc}}
(D);

\draw[line, <->, thick,dashed] (C) -- 
    node[midway, above]{Section~\ref{subsec:VC-equiv}} 
    node[midway, below]{Lemma~\ref{lemma:vertex-cover-two-graphs}} 
(D);

\end{tikzpicture}
\caption{Overview of the strategy used in this section. Correctness of Lemma~\ref{lemma:3SAT-to-SMD}
follows from the correctness of Lemma~\ref{lemma:Ex-3-Par-3-SAT-to-VC}, 
Lemma~\ref{lemma:vertex-cover-two-graphs} and Proposition~\ref{prop:smd-vc}.\label{fig:overview-strategy}}
\end{figure}

\subsection{Reduction to Strong Metric Dimension}\label{subsec:3SAT-to-SMD}

The reduction takes as input an instance \((\psi, \langle X, Y, Z\rangle)\) of 
\textsc{Exact-3-Partitioned-3-SAT} with $m$ clauses and $3n$ variables,
partitioned equally into sets $X$, $Y$, and $Z$.
Let \(X = \{x_1, x_2, \dots, x_n\}\),  
\(Y = \{y_1, y_2, \dots, y_n\}\), and
\(Z = \{z_1, z_2, \dots, z_n\}\).
The reduction outputs an instance $(G,k)$ of \textsc{Strong Metric Dimension}.
The graph $G$ consists of \emph{variable
encoding gadgets} and 
\emph{clause encoding gadgets} 
that are connected by long paths routed 
through intermediary vertices called 
\emph{portals}.

At a high level, every literal, both in 
its variable form and its occurrence in a 
clause, is represented by a \emph{critical vertex}.
We will ensure that a pair \((u, v)\) of non-pendant vertices is mutually maximally distant in \(G\) only if both
are critical vertices.
We now want to ensure that only the 
appropriate pair of critical vertices
are mutually maximally distant.
Consider a variable \(x_i\)
which appears positively in clause \(C_{\lambda}\).
On the variable side, the reduction creates a critical vertex \(x_{i,t}\) representing the positive literal of \(x_i\) whereas it creates a critical vertex \(x^\lambda_{i, t}\) on clause side representing the occurrence of the positive literal of \(x_i\) in clause \(C_{\lambda}\).
We want to ensure that \(x_{i,t}\) and \(x^\lambda_{i, t}\) are mutually maximally distant, but \(x_{i,t}\) is not mutually maximally distant from any other critical vertex representing a different literal
in a (different) clause, e.g., \(x^{\lambda'}_{i', t}\) for some \(i' \neq i\). 

The reduction adds the following four paths via portal
vertices $\alpha^{X_T,X_F}$ and $\beta^{X_T,X_F}$:
\begin{itemize}[nolistsep]
\item a path of length $(N^2 + iN + 1)$ connecting $x_{i,t}$ to $\alpha^{X_T,X_F}$;
\item a path of length $(N^2 - iN + 1)$ connecting $x_{i,t}$ to $\beta^{X_T,X_F}$;
\item a path of length $(N^2 + iN + 1)$ connecting $x^\lambda_{i, t}$ to $\alpha^{X_T,X_F}$; and
\item a path of length $(N^2 - iN + 1)$ connecting $x^\lambda_{i, t}$ to $\beta^{X_T,X_F}$.
\end{itemize}
This yields that the distance between $x_{i,t}$ and $x^\lambda_{i, t}$ is exactly $2N^2 + 2$.
We ensure that this distance is the largest possible distance between any two critical vertices in $G$ by introducing
a semi-global vertex $g$ and connecting it to all critical vertices via paths of length $N^2 + 1$.

Crucially, this perfect cancellation 
of \(\pm iN\) terms \emph{only} occurs when the index $i$ matches on both sides of the portal. 
If we attempt to cross-route paths between the positive literal of $x_i$ and 
the literal of a different variable $x_{i'}$ (where $i \neq i'$), 
the offsets do not cancel. 
The distance evaluates to 
$2N^2 \pm (i - i')N$, making
the shortest distance strictly less than $2N^2 + 2$.
In this case, the two corresponding
critical vertices are not mutually maximally distant.

We now move to the formal description of the reduction.
We partition the vertex set of the resulting graph $G$ into three disjoint parts:
\begin{enumerate}[nolistsep]
\item vertices of degree exactly one, called \emph{pendant vertices};
\item vertices adjacent to at least one pendant vertex, called \emph{support vertices}; and
\item non-pendant vertices that are not adjacent to any pendant vertex, called \emph{critical vertices}.
\end{enumerate}
We denote these sets by $V_p$, $V_s$, and $V_c$, respectively.  
Initially, the reduction sets $V(G)=V_p=V_s=V_c=\emptyset$.
For notational convenience, we define two functions $\g, \f : V_c \to V_s$.
Whenever we say \emph{add a vertex $u$ to $V_s$}, we also add a new
pendant vertex $p$ to $V_p$ and the edge $up$ to $E(G)$, without stating this
explicitly each time.  
Moreover, once a vertex $u$ is added to $V_c$, we ensure that no pendant
vertices will be attached to $u$ later in the construction.
See Figure~\ref{fig:reduction-overview} for an overview of the construction.

\subparagraph{Encoding Variables.}
For every variable $x_i \in X$, the reduction adds two vertices
$x_{i,t}$ and $x_{i,f}$ to $V_c$, representing the positive and negative literals,
respectively.  It also adds four vertices 
$\g(x_{i,t}), \f(x_{i,t}), \g(x_{i,f})$, and $\f(x_{i,f})$ to $V_s$.
It adds edges so that $x_{i,t}$ is adjacent to $\g(x_{i,t})$ and $\f(x_{i,t})$, and
$x_{i,f}$ is adjacent to $\g(x_{i,f})$ and $\f(x_{i,f})$.

Define \(X_T = \{x_{i,t} \mid i \in [n]\}\) and \(X_F = \{x_{i,f} \mid i \in [n]\}\).
We construct an analogous gadget for every variable in $Y$ and $Z$,
defining $Y_T, Y_F, Z_T$, and $Z_F$ in the same way.

\begin{figure}[!htbp]
    \centering
\includegraphics[scale=0.95]{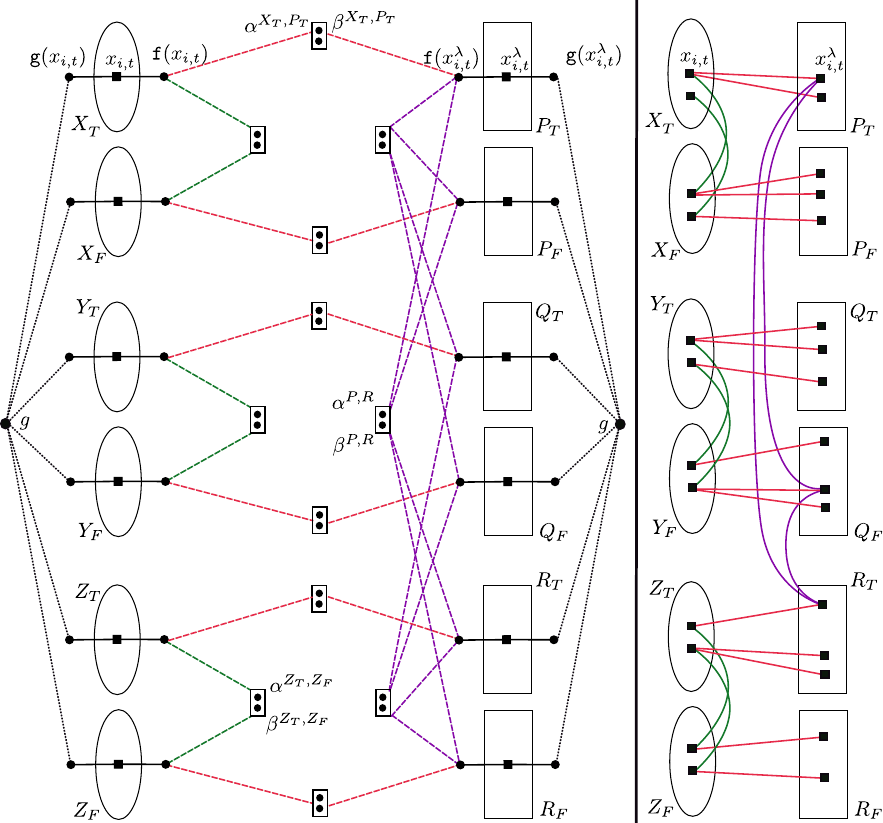}
\caption{(Left) Overview of the graph $G$ constructed by the reduction from
\textsc{Exact-3-Partitioned-3-SAT} to \textsc{Strong Metric Dimension}.
For notational clarity, we show the same vertex $g$ twice (the vertices in
the middle-left and middle-right of the figure represent the same vertex).
Moreover, we label only a subset of the vertices; all vertices belonging to
the same set have the same type of connections as the representative vertex
shown in the figure.
Vertices marked by $(\blacksquare)$ are critical vertices.
All remaining vertices, denoted by $(\bullet)$ or lying on the paths drawn
with dotted or dashed lines, are support vertices.
Pendant vertices are omitted from the figure for clarity.
Each dotted path starting from $g$ has length $N^2$.
The lengths of the dashed paths depend on their endpoints and are of the form
$N^2 \pm iN$ for some $i \in [n]$ or $i \in [m]$.
(Right) Overview of the graph $H$ constructed by the reduction from
\textsc{Exact-3-Partitioned-3-SAT} to \textsc{Vertex Cover}.\label{fig:reduction-overview}}
\end{figure}

\subparagraph{Encoding Clauses.}
Consider a clause 
\(C_{\lambda} = \langle x_i \lor \neg y_j \lor z_{\ell} \rangle \)
for some $\lambda \in [m]$ and \(i, j, \ell \in [n]\).  
The reduction adds three vertices 
$x^{\lambda}_{i,t}$, $y^{\lambda}_{j,f}$, and $z^{\lambda}_{\ell,t}$ to $V_c$,
corresponding to the literals appearing in $C_{\lambda}$.  
As in the previous step, for every $u \in \{x^{\lambda}_{i,t}, y^{\lambda}_{j,f}, z^{\lambda}_{\ell,t}\}$,
the reduction adds $\g(u)$ and $\f(u)$ to $V_s$ and makes both adjacent to $u$.

Let $P_T$ denote the collection of vertices corresponding to
positive occurrences of variables from $X$ in clauses.
Formally,
\(P_T = \{ x^{\lambda}_{i,t} \mid i \in [n], \lambda \in [m] \text{ and } x_i \text{ appears positively in } C_{\lambda} \}\).
Note that a variable $x_i$ may have multiple representatives in $P_T$,
depending on how many clauses contain $x_i$ positively.
Analogously, define $P_F$ for negative occurrences of variables in $X$.
Similarly, define $Q_T$ and $Q_F$ for positive and negative occurrences
of variables in $Y$, and define $R_T$ and $R_F$ for positive and negative
occurrences of variables in $Z$.

\subparagraph{Semi-global Vertex and Connecting Paths.}
The reduction adds a vertex $g$ to $V_s$, which serves as a semi-global vertex.
Let $N = {\left( n + m \right)}^2$.
For every vertex of the form $\g(u)$ with $u \in V_c$, the reduction adds
a path of length $N^2$ connecting $g$ to $\g(u)$.  
Every internal vertex of this path is added to $V_s$.
Consequently, all vertices in $V_c$ are at distance at most $2N^2 + 2$ from one another.

\subparagraph{Portal Vertices and Connecting Paths.}
A \emph{portal} is a pair of vertices denoted by $\alpha^{A,B}$ and $\beta^{A,B}$,
indicating that the portal connects only to vertices in the sets $A$ and $B$.
Alternately, it would be helpful to think that these two vertices
\emph{ports} the paths starting at vertices in \(A\) to vertices in \(B\) (and vice versa). 
Every vertex introduced on a portal path, 
including vertices in portals, belongs to $V_s$
(and hence receives an attached pendant vertex).
The reduction introduces three types of portals, depending on the sets being connected.
See Figure~\ref{fig:distance-overview}.
\begin{enumerate}[nolistsep]
\item \emph{Connecting positive and negative literals (which are indicated in green in Figure~\ref{fig:distance-overview}).}
Consider the sets $X_T$ and $X_F$.
The reduction adds a portal consisting of vertices
$\alpha^{X_T,X_F}$ and $\beta^{X_T,X_F}$.
For each variable $x_i \in X$, the reduction adds the following four paths:
\begin{itemize}
\item a path of length $(N^2 + iN)$ connecting $\f(x_{i,t})$ to $\alpha^{X_T,X_F}$;
\item a path of length $(N^2 - iN)$ connecting $\f(x_{i,t})$ to $\beta^{X_T,X_F}$;
\item a path of length $(N^2 - iN)$ connecting $\f(x_{i,f})$ to $\alpha^{X_T,X_F}$;
\item a path of length $(N^2 + iN)$ connecting $\f(x_{i,f})$ to $\beta^{X_T,X_F}$.
\end{itemize}

The reduction adds analogous portals
$\alpha^{Y_T,Y_F}, \beta^{Y_T,Y_F}$ and
$\alpha^{Z_T,Z_F}, \beta^{Z_T,Z_F}$,
together with corresponding paths from $\f(u)$ for all
$u \in Y_T \cup Y_F$ and $u \in Z_T \cup Z_F$, respectively.

\item \emph{Connecting variables to their clause occurrences (which are indicated in red in Figure~\ref{fig:distance-overview}).}
Consider the set $X_T$ of positive literals of variables in $X$
and the set $P_T$ of positive clause occurrences of variables in $X$.
The reduction introduces a portal
$\alpha^{X_T,P_T}$ and $\beta^{X_T,P_T}$.

\begin{itemize}[nolistsep]
\item For every $x_{i,t} \in X_T$, the reduction adds:
\begin{itemize}[nolistsep]
\item a path of length $(N^2 + iN)$ from $\f(x_{i,t})$ to $\alpha^{X_T,P_T}$;
\item a path of length $(N^2 - iN)$ from $\f(x_{i,t})$ to $\beta^{X_T,P_T}$.
\end{itemize}

\item For every $x^{\lambda}_{i,t} \in P_T$, the reduction adds:
\begin{itemize}[nolistsep]
\item a path of length $(N^2 - iN)$ from $\f(x^{\lambda}_{i,t})$ to
      $\alpha^{X_T,P_T}$;
\item a path of length $(N^2 + iN)$ from $\f(x^{\lambda}_{i,t})$ to
      $\beta^{X_T,P_T}$.
\end{itemize}
\end{itemize}

The reduction constructs analogous portals and paths for the pairs:
\((X_F,P_F)\), 
\((Y_T,Q_T)\),
\((Y_F,Q_F)\), 
\((Z_T,R_T)\),
\((Z_F,R_F)\).

\begin{figure}[!tbp]
\begin{center}
\includegraphics[scale=1]{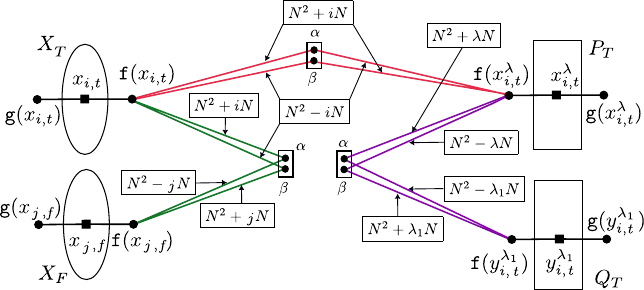}
\end{center}
\caption{Overview of the distances in graph \(G\). For the notation clarity, 
we do not specify the complete name of the portal vertices.
{We add a negative literal with index \(j\) to highlight the fact that \(x_{i, t}\) and \(x_{j, f}\) are at distance
\(2N^2\) if and only if \(i = j\).}\label{fig:distance-overview}}
\end{figure}

\item \emph{Connecting literals in the same clause (which are indicated in purple in Figure~\ref{fig:distance-overview}).}
Define 
\(P = P_T \cup P_F\), \(Q = Q_T \cup Q_F\), \(R = R_T \cup R_F\).
The reduction introduces the following three portals:
\((\alpha^{P,Q}, \beta^{P,Q})\), \((\alpha^{Q,R}, \beta^{Q,R})\),
\((\alpha^{R,P}, \beta^{R,P})\).
Consider a clause 
\(C_{\lambda} = \langle x_i \lor \neg y_j \lor z_{\ell} \rangle \)
for some $\lambda \in [m]$.  
By construction, this clause corresponds to vertices
$x^{\lambda}_{i,t} \in P_T$, 
$y^{\lambda}_{j,f} \in Q_F$, and 
$z^{\lambda}_{\ell,t} \in R_T$.
For this clause, the reduction adds four paths for each of the following three pairs of vertices:
\((i)\) $x^{\lambda}_{i,t}$ and $y^{\lambda}_{j,f}$,
\((ii)\) $y^{\lambda}_{j,f}$ and $z^{\lambda}_{\ell,t}$, and
\((iii)\) $z^{\lambda}_{\ell,t}$ and $x^{\lambda}_{i,t}$.
The lengths of these paths are specified in Table~\ref{table:lenghts-clause-to-clause}.
\end{enumerate}

\begin{table}[t]
\centering
\begin{minipage}{0.32\textwidth}
\centering
\renewcommand{\arraystretch}{1.5}
\begin{tabular}{p{0.17\textwidth}p{0.18\textwidth}|p{0.42\textwidth}}
\toprule
\multicolumn{2}{c}{Endpoints} & Length \\
\midrule
\rowcolor{gray!20}\(\f(x^{\lambda}_{i, t})\) & \(\alpha^{P, Q}\) & \((N^2 + \lambda \cdot N)\) \\
\midrule
\rowcolor{gray!20}\(\f(x^{\lambda}_{i, t})\) & \(\beta^{P, Q}\) & \((N^2 - \lambda \cdot N)\) \\
\midrule
\(\f(x^{\lambda_1}_{i, f})\) & \(\alpha^{P, Q}\) & \((N^2 + \lambda_1 \cdot N)\) \\
\midrule
\(\f(x^{\lambda_1}_{i, f})\) & \(\beta^{P, Q}\) & \((N^2 - \lambda_1 \cdot N)\) \\
\midrule
\(\f(y^{\lambda_2}_{j, t})\) & \(\alpha^{P, Q}\) & \((N^2 - \lambda_2 \cdot N)\) \\
\midrule
\(\f(y^{\lambda_2}_{j, t})\) & \(\beta^{P, Q}\) & \((N^2+\lambda_2\cdot N)\) \\
\midrule
\rowcolor{gray!20}\(\f(y^{\lambda}_{j, f})\) & \(\alpha^{P, Q}\) & \((N^2-\lambda \cdot N)\) \\
\midrule
\rowcolor{gray!20}\(\f(y^{\lambda}_{j, f})\) & \(\beta^{P, Q}\) & \((N^2+\lambda \cdot N)\) \\
\bottomrule
\end{tabular}
\end{minipage}
\hfill
\begin{minipage}{0.32\textwidth}
\centering
\renewcommand{\arraystretch}{1.5}
\begin{tabular}{p{0.17\textwidth}p{0.18\textwidth}|p{0.42\textwidth}}
\toprule
\multicolumn{2}{c}{Endpoints} & Length \\
\midrule
\(\f(y^{\lambda_3}_{j, f})\) & \(\alpha^{Q, R}\) & \((N^2 + \lambda_3 \cdot N)\) \\
\midrule
\(\f(y^{\lambda_3}_{j, f})\) & \(\beta^{Q, R}\) & \((N^2 - \lambda_3 \cdot N)\) \\
\midrule
\rowcolor{gray!20}\( \f(y^{\lambda}_{j, f})\) & \(\alpha^{Q, R}\) & \((N^2 + \lambda \cdot N)\) \\
\midrule
\rowcolor{gray!20}\( \f(y^{\lambda}_{j, f})\) & \(\beta^{Q, R}\) & \((N^2 - \lambda \cdot N)\) \\
\midrule
\rowcolor{gray!20}\( \f(z^{\lambda}_{\ell, t})\) & \(\alpha^{Q, R}\) & \((N^2 - \lambda \cdot N)\) \\
\midrule
\rowcolor{gray!20}\( \f(z^{\lambda}_{\ell, t})\) & \(\beta^{Q, R}\) & \((N^2 + \lambda \cdot N)\) \\
\midrule
\(\f(z^{\lambda_4}_{\ell, f})\) & \(\alpha^{Q, R}\) & \((N^2 - \lambda_4 \cdot N)\) \\
\midrule
\(\f(z^{\lambda_4}_{\ell, f})\) & \(\beta^{Q, R}\) & \((N^2 + \lambda_4 \cdot N)\) \\
\bottomrule
\end{tabular}
\end{minipage}
\hfill
\begin{minipage}{0.32\textwidth}
\centering
\renewcommand{\arraystretch}{1.5}
\begin{tabular}{p{0.16\textwidth}p{0.18\textwidth}|p{0.42\textwidth}}
\toprule
\multicolumn{2}{c}{Endpoints} & Length \\
\midrule
\rowcolor{gray!20}\( \f(z^{\lambda}_{\ell, t})\) & \(\alpha^{R, P}\) & \((N^2 + \lambda \cdot N)\) \\
\midrule
\rowcolor{gray!20}\( \f(z^{\lambda}_{\ell, t})\) & \(\beta^{R, P}\) & \((N^2 - \lambda \cdot N)\) \\
\midrule
\(\f(z^{\lambda_5}_{\ell, f})\) & \(\alpha^{R, P}\) & \((N^2 + \lambda_5 \cdot N)\) \\
\midrule
\(\f(z^{\lambda_5}_{\ell, f})\) & \(\beta^{R, P}\) & \((N^2 - \lambda_5 \cdot N)\) \\
\midrule
\rowcolor{gray!20}\( \f(x^{\lambda}_{i, t})\) & \(\alpha^{R, P}\) & \((N^2 - \lambda \cdot N)\) \\
\midrule
\rowcolor{gray!20}\( \f(x^{\lambda}_{i, t})\) & \(\beta^{R, P}\) & \((N^2 + \lambda \cdot N)\)\\
\midrule
\(\f(x^{\lambda_6}_{i, f})\) & \(\alpha^{R, P}\) & \((N^2 - \lambda_6 \cdot N)\) \\
\midrule
\(\f(x^{\lambda_6}_{i, f})\) & \(\beta^{R, P}\) & \((N^2 + \lambda_6 \cdot N)\)\\
\bottomrule
\end{tabular}
\end{minipage}
\caption{The paths added by the reduction corresponding to clause 
\(C_{\lambda} = \langle x_i \lor \neg y_j \lor z_{\ell} \rangle\)
are in the rows highlighted by gray rows.
The other rows correspond to negation of the literals
appeared in \(C_{\lambda}\), assuming they are 
appearing in other clause.
The three tables corresponds to the connections between
\(P_T \cup P_F\) and \(Q_T \cup Q_F\), 
\(Q_T \cup Q_F\) and \(R_T \cup R_F\), 
and
 \(R_T \cup R_F\) and \(P_T \cup P_F\), respectively.\label{table:lenghts-clause-to-clause}}
 \vspace{-0.75cm}
\end{table}

This completes the construction of graph \(G\).
Recall that the number of variables and clauses in the instance \(\psi\)
of \textsc{Exact-3-Partitioned-3-SAT} are \(3n\) and \(m\),
respectively. 
Moreover,  \(V_p\) is the collection 
of all the pendant vertices 
in \(G\).
The reduction defines \(k = 3n + 2m + |V_p| - 1\),
and returns \((G, k)\) as an instance of \textsc{Strong Metric Dimension}.

\subparagraph{Properties of Graph \(G\).}
Let \(S\) denote the set of all portal vertices introduced in the construction.
There are three portals to connect 
positive and negative literals, 
six portals to connect variables 
to their clause occurrences, 
and three portals to connect literals within the same clause.
Hence, we have \(12\) portals,
which implies \(24\) portal endpoints, i.e. \(|S| = 24\). 
Define \(S_g = S \cup \{g\}\) where
\(g\) is the semi-global vertex. 

We claim that \(S_g\) forms a feedback vertex set of \(G\). 
Consider the graph \(G - S_g\). 
Removing the portals and the semi-global vertex deletes all long-distance
cycles connecting the local gadgets. 
The remaining graph is a 
collection of pairwise 
disjoint connected components, 
each corresponding to a single literal 
in a variable encoding or a clause encoding, or to pendant vertices attached to the portal vertices. 
Within any such component, the structure 
consists only of a critical vertex, its 
associated support and pendant vertices, and the paths which now terminate at
a pendant vertex.
Consequently, \(G - S_g\) is a forest, which implies that the feedback vertex set number \(\fvs\) of the graph is at most \(|S_g| = 25\).

Furthermore, note that each connected component of \(G - S_g\) is a tree obtained by subdividing the edges of a star with at most five leaves, where the central vertex corresponds to the critical vertex representing a literal, and the leaves correspond to the support vertices and the paths attached to it. Therefore, each connected component of \(G - S_g\) has pathwidth at most \(2\). 
Using the fact that \(\pw(G) \leq |S_g| + \pw(G - S_g)\), we have \(\pw(G) \leq 27\).

\subsection{Reduction to Vertex Cover}\label{subsec:3SAT-to-VC}

Recall the standard textbook reduction from 
\textsc{$3$-SAT} to \textsc{Vertex Cover} 
~\cite{DBLP:books/daglib/0015106}, which we have also discussed in
Section~\ref{sec:np-hardness-diam}.  
We adapt this to the reduction that given an instance $(\psi, \langle X, Y, Z \rangle)$ of 
\textsc{Exact-3-Partitioned-3-SAT} with $3n$ variables and $m$ clauses,
constructs a graph $H$ as follows.
By construction, \(V(H)\) is a subset of \(V(G)\), 
and hence we use the same names.
See Figure~\ref{fig:reduction-overview}.

\subparagraph{Encoding Variables.}
For every variable $x_i \in X$, the reduction adds two vertices
$x_{i,t}$ and $x_{i,f}$ to $H$, representing the positive and negative literals,
respectively.  
It also adds a matching edge $x_{i,t}x_{i,f}$ for every $i \in [n]$.
These edges are shown in green in Figure~\ref{fig:reduction-overview}.
We define the sets $X_T, X_F, Y_T, Y_F, Z_T$, and $Z_F$ exactly as in the previous reduction.

\subparagraph{Encoding Clauses.}
Consider a clause 
\(C_{\lambda} = \langle x_i \lor \neg y_j \lor z_{\ell} \rangle\)
for some $\lambda \in [m]$ and \(i, j, \ell \in [n]\).  
The reduction adds three vertices 
$x^{\lambda}_{i,t}$, $y^{\lambda}_{j,f}$, and $z^{\lambda}_{\ell,t}$ to $H$,
corresponding to the literals appearing in $C_{\lambda}$.  
It adds edges to make these three vertices adjacent with each other.
These edges are indicated in purple in Figure~\ref{fig:reduction-overview}.
We define the sets $P_T, P_F, Q_T, Q_F, R_T$, and $R_F$ as in the previous reduction.

\subparagraph{Connecting literals and their appearances in clauses.}
Consider the set $X_T$ of positive literals of variables in $X$
and the set $P_T$ of their positive clause occurrences.
For every $x_{i,t} \in X_T$ and every corresponding vertex
$x^{\lambda}_{i,t} \in P_T$ for some $\lambda \in [m]$,  
the reduction adds the edge $x_{i,t}x^{\lambda}_{i,t}$.  
These edges are shown in red in Figure~\ref{fig:reduction-overview}.
The reduction adds analogous edges between the following pairs of sets:
\((X_F, P_F)\), \((Y_T, Q_T)\),\((Y_F, Q_F)\),\((Z_T, R_T)\), and  \((Z_F, R_F)\).

The reduction outputs $(H,k)$ as the resulting instance of 
\textsc{Vertex Cover}, where $k = 3n + 2m$.
The correctness of the following lemma follows from the standard reduction 
mentioned in~\cite[Section~3.1.3]{DBLP:books/fm/GareyJ79}.

\begin{lemma}\label{lemma:Ex-3-Par-3-SAT-to-VC}
For a formula $\psi$ with $3n$ variables and $m$ clauses,
\((\psi, \langle X, Y, Z\rangle)\) is a yes-instance of \textsc{Exact-3-Partitioned-3-SAT}
if and only if  \((H, 3n + 2m)\) is a yes-instance of \textsc{Vertex Cover}.
\end{lemma}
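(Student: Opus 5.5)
The plan is to follow the textbook argument of~\cite[Section~3.1.3]{DBLP:books/fm/GareyJ79}, adapted to the fact that $H$ uses a separate clause-occurrence vertex for every literal occurrence, together with the variable edges, clause triangles and variable-to-occurrence edges. First I record the lower bound: every vertex cover of $H$ contains at least one endpoint of each of the $3n$ pairwise vertex-disjoint edges $x_{i,t}x_{i,f}$ (and their $Y$, $Z$ analogues). It also contains at least two vertices of each of the $m$ pairwise vertex-disjoint clause triangles. These $3n+m$ subgraphs are pairwise vertex-disjoint, since variable vertices and clause-occurrence vertices are distinct, so every vertex cover has size at least $3n+2m$.

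\emph{Forward direction.} Fix a satisfying assignment $\sigma$ of $\psi$. For each variable, put into the cover the vertex of its matching edge corresponding to the literal made \emph{true} by $\sigma$; for instance $x_{i,t}$ if $\sigma(x_i)=1$ and $x_{i,f}$ otherwise. For each clause $C_\lambda$, choose one literal satisfied by $\sigma$ and put the other two occurrence vertices of its triangle into the cover. This gives exactly $3n+2m$ vertices. Matching edges and triangle edges are covered by construction. For a red edge such as $x_{i,t}x^\lambda_{i,t}$, there are two cases. If $x^\lambda_{i,t}$ is not in the cover, it is the chosen true literal of $C_\lambda$, so $\sigma(x_i)=1$ and $x_{i,t}$ is in the cover. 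Otherwise the edge is already covered. The same reasoning applies to the pairs $(X_F,P_F)$, $(Y_T,Q_T)$, $(Y_F,Q_F)$, $(Z_T,R_T)$ and $(Z_F,R_F)$.

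\emph{Backward direction.} Let $C$ be a vertex cover of size at most $3n+2m$. By the lower bound, $C$ contains exactly one endpoint of each variable edge and exactly two vertices of each triangle, and nothing else. Define $\sigma$ by setting a variable true exactly when its positive-literal vertex lies in $C$. For each clause $C_\lambda$, let $u$ be the unique triangle vertex not in $C$, say $u=x^\lambda_{i,t}$. The red edge $x_{i,t}u$ must be covered, so $x_{i,t}\in C$, which means $\sigma(x_i)=1$ and $C_\lambda$ is satisfied. Negative occurrences are handled symmetrically: $u=y^\lambda_{j,f}$ forces $y_{j,f}\in C$, so $y_j$ is false. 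Hence $\psi$ is satisfiable.

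The only step requiring care, and thus the main (mild) obstacle, is the tightness argument in the backward direction. There I must check that the matching edges and triangles are vertex-disjoint, so that the budget $3n+2m$ leaves no slack and the choice on each variable edge is exactly one endpoint. This holds because every occurrence vertex is a fresh vertex distinct from the variable vertices, and the Exact-3-Partitioned structure guarantees that each triangle has exactly three distinct occurrence vertices.
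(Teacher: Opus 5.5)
Your proposal is correct and follows the same route as the paper, which proves the lemma only by citing the standard \textsc{3-SAT}-to-\textsc{Vertex Cover} reduction of Garey and Johnson (Section~3.1.3). Your write-up spells out that textbook argument faithfully for this graph $H$: the vertex-disjoint lower bound, the cover built from true literals together with two vertices per clause triangle, and the tightness argument that recovers a satisfying assignment.
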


\subsection{\texorpdfstring{Strong Resolving Graph \(G_{SR}\)} {Strong Resolving Graph GSR}}\label{subsec:VC-equiv}

In this subsection, we consider the strong resolving graph \(G_{SR}\) of \(G\)
and compare its vertex cover with that of \(H\).
Note that, by the construction, \(V(H)\) is a subset of \(V(G)\) and hence also of \(V(G_{SR})\).
In fact, by the construction, \(V(H) = V_c\).
We now argue about the structure of graph \(G_{SR}\).
Towards that, we recall the following easy, but critical, 
observation in~\cite{DBLP:conf/icalp/FoucaudGK0IST24}.
See~\cite[Observation~22]{DBLP:conf/icalp/FoucaudGK0IST24}.

\begin{observation}\label{obs:pendant-in-G-to-clique-in-GSR}
Consider a connected graph $G$ that has at least $3$ vertices.
Suppose $P \subseteq V(G)$ is the collection of all the pendant vertices in $G$.
Then, $P$ is a clique in $G_{SR}$, and every vertex in $N(P)$ is an isolated
vertex in $G_{SR}$.
\end{observation}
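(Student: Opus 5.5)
We prove the two claims separately, directly from the definitions of ``maximally distant'' and of \(G_{SR}\). Throughout we use that \(G\) is connected with at least three vertices. Hence every pendant vertex \(p\) has a unique neighbour \(s\), and \(s\) has some other neighbour besides \(p\).

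\emph{Pendant vertices form a clique in \(G_{SR}\).} Take distinct \(p, q \in P\).

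\begin{itemize}[nolistsep]
\item First we show they are non-adjacent in \(G\). If \(pq \in E(G)\), then \(\{p,q\}\) would be a connected component isolated from the rest of the graph. This contradicts connectivity together with \(|V(G)| \ge 3\).
\item Let \(s\) be the unique neighbour of \(p\). Every shortest path from \(q\) to \(p\) must enter \(p\) through \(s\), since \(q \neq p\). Therefore \(\dist(s,q) = \dist(p,q) - 1 \le \dist(p,q)\).
\item Since \(N(p) = \{s\}\), the condition of Definition~\ref{def:mutually-max-distant} holds for \(p\) with respect to \(q\): \(p\) is maximally distant from \(q\).
\item By symmetry, \(q\) is maximally distant from \(p\). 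So \(pq \in E(G_{SR})\), and \(P\) is a clique in \(G_{SR}\).
\end{itemize}

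\emph{Support vertices are isolated in \(G_{SR}\).} Let \(s \in N(P)\) with pendant neighbour \(p\), and let \(v \neq s\) be arbitrary. It suffices to show that \(s\) is not maximally distant from \(v\), i.e.\ that some neighbour of \(s\) is strictly farther from \(v\) than \(s\) is.

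\begin{itemize}[nolistsep]
\item If \(v \neq p\), then every path from \(v\) to \(p\) passes through \(s\). Hence \(\dist(p,v) = \dist(s,v) + 1 > \dist(s,v)\), and \(p \in N(s)\) witnesses the violation.
\item If \(v = p\), then \(\dist(s,p) = 1\). Since \(|V(G)| \ge 3\) and \(G\) is connected, \(s\) has another neighbour \(y \neq p\). Because \(p\) is adjacent only to \(s\), we have \(\dist(y,p) = 2 > 1\).
\end{itemize}

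In both cases \(s\) is not maximally distant from \(v\), so \(sv \notin E(G_{SR})\). As \(v\) was arbitrary, \(s\) is isolated in \(G_{SR}\).

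The argument is entirely routine; the only delicate point is the case \(v = p\), where the hypothesis \(|V(G)| \ge 3\) is needed to guarantee that \(s\) has a second neighbour. The same hypothesis also rules out \(G\) being a single edge in the clique part.
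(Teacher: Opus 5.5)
Your proof is correct: the unique-neighbour argument shows that any two pendant vertices are mutually maximally distant. For a support vertex \(s\), the pendant neighbour \(p\) witnesses that \(s\) is not maximally distant from any \(v \neq s, p\). When \(v = p\), a second neighbour of \(s\) plays that role; it exists by connectivity and \(|V(G)| \ge 3\). The paper gives no proof of its own and imports the statement from Foucaud et al.~\cite{DBLP:conf/icalp/FoucaudGK0IST24} (Observation~22 there); your direct check against Definition~\ref{def:mutually-max-distant} is the routine argument behind that citation.
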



By the construction in Subsection~\ref{subsec:3SAT-to-SMD}, and especially
because of the semi-global vertex \(g\) and its connection to other vertices,
it is evident that \(G\) is a connected graph.
Hence, the above observation is applicable to \(G\).
Recall that \(V_p\) is the collection of pendant vertices in \(G\), 
\(V_s\) is the collection of support vertices, i.e.~vertices that are adjacent to a pendant vertex,
where the set of remaining set of vertices, called critical vertices, is denoted by \(V_c\).

By Definition~\ref{def:strong-resolving-graph}, and the above observation, we can
conclude that \(V(G_{SR})\) can be partitioned into the following three parts:
\begin{itemize}[nolistsep]
\item set of vertices \(V_p\) which forms a clique,
\item set of vertices \(V_s\) which is a collection of isolated vertices, and
\item the remaining set of vertices denoted by \(V_c\).
\end{itemize}

We first prove the following observation.

\begin{observation}\label{obs:wlog-vertex-cover}
    There exists a vertex \(p \in V_p\) such that, without loss of generality, 
    every vertex cover 
    \(C \subseteq V(G_{SR})\) 
    contain all vertices in \(V_p \setminus \{p\}\).
\end{observation}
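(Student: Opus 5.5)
Since \(G\) is connected and has at least three vertices, Observation~\ref{obs:pendant-in-G-to-clique-in-GSR} tells us that \(V_p\) is a clique in \(G_{SR}\). The plan is to combine two facts. First, any vertex cover of a clique misses at most one of its vertices. Second, a cover that misses a pendant vertex other than one fixed \(p\) can be exchanged for one of the same size that contains all of \(V_p \setminus \{p\}\).

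\textbf{Step 1.} Fix a vertex cover \(C\) of \(G_{SR}\). If some two vertices \(q, q'\) of \(V_p\) were both outside \(C\), the edge \(qq'\) of \(G_{SR}\) would be uncovered. Hence \(|V_p \setminus C| \le 1\).

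\textbf{Step 2.} Determine the neighbourhood in \(G_{SR}\) of a pendant vertex \(q\). Let \(s\) be its unique neighbour in \(G\), a support vertex.
\begin{itemize}
\item For any \(w\) that is not a pendant vertex, \(w\) has a neighbour \(y\) with \(\dist(y,q) > \dist(w,q)\). Since \(w\) has degree at least two, it has a neighbour farther from \(s\), except in the degenerate case where all of its neighbours are at the same distance from \(s\). In that case the pair \((w,q)\) is still not mutually maximally distant, because \(q\) is not maximally distant from \(w\) (its unique neighbour \(s\) is strictly closer to \(w\)).
\item The cleaner way to say this: \(q\) is maximally distant from every \(w \ne q\), so \(wq \in E(G_{SR})\) if and only if \(w\) is maximally distant from \(q\), equivalently from \(s\).
\end{itemize}
The key claim I would check is that the only vertices maximally distant from \(s\) are pendant vertices, together with possibly some critical vertices. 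Critical vertices are the natural candidates: \(x_{i,t}\) is adjacent only to \(\g(x_{i,t})\) and \(\f(x_{i,t})\), and could be farther from \(s\) than both of them.

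\textbf{Step 3.} Choose \(p\) and do the exchange. Suppose \(C\) misses some \(q \in V_p\) with \(q \neq p\). Then \(C\) contains every \(G_{SR}\)-neighbour of \(q\), namely \(V_p \setminus \{q\}\) and the critical vertices maximally distant from its support. Define \(C' = (C \setminus \{p\}) \cup \{q\}\), which has the same size. For \(C'\) to be a vertex cover, every \(G_{SR}\)-neighbour of \(p\) must lie in \(C'\), so the non-pendant \(G_{SR}\)-neighbours of \(p\) must lie in \(C\). The cleanest choice is therefore a \(p\) whose only \(G_{SR}\)-neighbours are the other pendant vertices. I would try the pendant attached to \(g\): since \(g\) reaches every \(\g(u)\) via paths of length \(N^2\), each critical vertex \(u\) has \(\f(u)\) leading outward along portal paths. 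This should put \(u\) strictly closer to \(g\) than some neighbour of \(u\) (for instance \(\f(u)\)), so \(u\) is not maximally distant from \(g\). If that check fails, I would instead pick the pendant attached to a portal vertex, or to the middle vertex of a long path.

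\textbf{Main obstacle.} The hard part is the distance verification in Step 3: showing that for the chosen \(p\), no critical vertex is maximally distant from its support. This needs a careful comparison of path lengths around \(g\) and the portals, using that \(N^2 \gg iN\). Once it holds, the exchange in Step 3 preserves both the cover property and the size, and the observation follows.
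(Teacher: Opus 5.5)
Your plan is the paper's proof. You take \(p\) to be the pendant vertex attached to \(g\). For every \(u \in V_c\) one has \(\dist(p,\f(u)) = N^2+3 > N^2+2 = \dist(p,u)\), because any route to \(\f(u)\) that avoids \(u\) must pass through another critical vertex and a portal path, and is far longer. So \(p\) has no \(G_{SR}\)-neighbours outside \(V_p\), and your explicit swap \(C' = (C \setminus \{p\}) \cup \{q\}\) spells out the step the paper leaves as ``it follows''.

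The first bullet of your Step 2 is wrong as written. Degree at least two does not give a farther neighbour, and \(s\) being strictly closer to \(w\) is exactly what makes \(q\) maximally distant from \(w\), not the opposite. Your second bullet states it correctly, and the argument never uses the first.
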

\begin{claimproof}
We argue that there exists a vertex \(p \in V_p\) such that \(p\) is not adjacent to any vertex in \(V_c\) in \(G_{SR}\).
Consider the graph \(G\) and let \(p\) be the pendant vertex adjacent to the vertex \(g\).
Let \(u\) be an arbitrary vertex in \(V_c\). 
There are two types of paths connecting \(p\) to \(\g(u)\) (and similarly to \(\f(u)\)):
paths that do not contain any portal vertices, and
paths that pass through at least one portal vertex.
By the construction of \(G\), every path from \(p\) to either \(\g(u)\) or \(\f(u)\) that contains 
a portal vertex is strictly longer than a path that avoids all portal vertices. 
Hence, the shortest paths from \(p\) to \(\g(u)\) and from \(p\) to \(\f(u)\) are precisely the paths that avoid portal vertices.
Consequently, \(\dist_G(p, \g(u)) = N^2 + 1\) and \(\dist_G(p, \f(u)) = N^2 + 3\).
Hence, \(u\) is not maximally distant from \(p\) in \(G\).
Therefore, \(u\) is not adjacent to \(p\) in \(G_{SR}\).
Since \(u\) is an arbitrary vertex in  \(V_c\), we conclude that there exists a vertex 
\(p \in V_p\) that is not adjacent to any vertex in \(V_c\) in \(G_{SR}\).
It follows that, we can assume without loss of generality that
every vertex cover \(C \subseteq V(G_{SR})\) contain all vertices in \(V_p \setminus \{p\}\).
\end{claimproof}

We now argue that a set \(C \subseteq V(G_{SR})\) is a vertex cover of \(G_{SR}\) if and only if 
\(C \setminus (V_p \setminus \{p\})\) 
is a vertex cover of \(H\). Formally, we prove the following lemma.

\begin{lemma}\label{lemma:vertex-cover-two-graphs}
Consider set \(C \subseteq V(G_{SR})\) that contains \(V_p \setminus \{p\}\).
Set \(C\) is a vertex cover of \(G_{SR}\) if and only if \(C \setminus (V_p \setminus \{p\})\)
is a vertex cover of \(H\).
\end{lemma}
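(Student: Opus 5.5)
The plan is to pin down the edge set of \(G_{SR}\) completely and show that, once \(V_p \setminus \{p\}\) is placed in the cover, the only edges left to cover are exactly the edges of \(H\) on \(V_c = V(H)\). The edges of \(G_{SR}\) split into four classes.
\begin{itemize}[nolistsep]
\item Edges inside \(V_p\): these form a clique by Observation~\ref{obs:pendant-in-G-to-clique-in-GSR}.
\item Edges incident to \(V_s\): there are none, since every support vertex is isolated in \(G_{SR}\).
\item Edges between \(V_p\) and \(V_c\).
\item Edges inside \(V_c\).
\end{itemize}
Every edge with at least one endpoint in \(V_p \setminus \{p\}\) is covered by \(C\). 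The remaining edges incident to \(p\) go to \(V_p\) (already covered), to \(V_s\) (none exist), or to \(V_c\) (none exist by the proof of Observation~\ref{obs:wlog-vertex-cover}). So \(C\) is a vertex cover of \(G_{SR}\) if and only if \(C\cap V_c\) covers \(G_{SR}[V_c]\). Since \(C\setminus(V_p\setminus\{p\})\) meets \(V(H)=V_c\) in exactly \(C\cap V_c\), the lemma reduces to the identity \(E(G_{SR}[V_c]) = E(H)\).

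To prove this identity, I will first compute distances between critical vertices. Any path between two critical vertices \(u,v\) leaves \(u\) through \(\g(u)\) or \(\f(u)\). Routing through \(g\) costs \(2N^2+2\). Routing through a single portal \((\alpha^{A,B},\beta^{A,B})\) costs \(2 + (N^2 \pm aN) + (N^2 \pm bN)\), where the signs are opposite on the two sides by construction. Taking the better of \(\alpha\) and \(\beta\), this equals \(2N^2+2-|a-b|N\) whenever \(u,v\) attach to a common portal with indices \(a,b\). A route through two or more portals, or one that passes through another critical vertex, has length at least roughly \(3N^2\) or \(4N^2\), so it is never shorter. Since \(N=(n+m)^2\) dominates all index differences, I conclude:
\begin{itemize}[nolistsep]
\item \(\dist(u,v)=2N^2+2\) exactly when \(u,v\) share a portal with equal indices, or share no portal at all;
\item otherwise \(\dist(u,v)<2N^2+2\).
\end{itemize}

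Next I check mutual maximal distance directly. Let \(u,v\in V_c\) with \(\dist(u,v)=2N^2+2\). The neighbours of \(u\) are \(\g(u)\) and \(\f(u)\), and each has distance at most \(2N^2+1\) to \(v\) through the \(g\)-route. So \(u\) is maximally distant from \(v\), and by symmetry the pair is mutually maximally distant. If instead \(\dist(u,v)<2N^2+2\), a shortest path leaves \(u\) via \(\f(u)\). Then \(\g(u)\) satisfies \(\dist(\g(u),v)=\dist(u,v)+1\), because every route from \(\g(u)\) passes through \(u\) or through \(g\), and the \(g\)-route has length \(2N^2+1\ge \dist(u,v)+1\). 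Hence \(u\) is not maximally distant, and the pair is not an edge of \(G_{SR}\).

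So the edges of \(G_{SR}[V_c]\) are exactly the pairs at distance \(2N^2+2\). It remains to match these pairs with \(E(H)\) portal by portal:
\begin{itemize}[nolistsep]
\item \(x_{i,t}\) and \(x_{j,f}\) share a portal, and they are at the maximum distance iff \(i=j\), giving the matching edges;
\item \(x_{i,t}\) and \(x^\lambda_{j,t}\) are at the maximum distance iff \(i=j\), giving the variable--occurrence edges;
\item clause vertices in \(P\) and \(Q\) are at the maximum distance iff \(\lambda=\lambda'\), giving the triangle edges.
\end{itemize}

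The main obstacle is the case of pairs that share no portal, for example \(x_{i,t}\) and \(y_{j,t}\), or \(x_{i,t}\) and \(x^\lambda_{j,f}\). By the computation above these also sit at distance \(2N^2+2\), which would create spurious edges in \(G_{SR}[V_c]\) that \(H\) does not have. I would try first to verify that the intended lengths force such pairs to be strictly closer, perhaps via a two-portal detour through \(g\)-free paths. If that fails, I would refine the distance analysis around how the portals interact. Making the non-edges of \(H\) correspond exactly to non-maximally-distant pairs is the delicate step; the rest is routine bookkeeping.
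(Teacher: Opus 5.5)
Your reduction of the lemma to the identity \(E(G_{SR}[V_c])=E(H)\) is the same as the paper's and is fine. So is your treatment of pairs at distance \(<2N^2+2\).

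\textbf{The gap.} The gap is the claim that each of \(\g(u)\) and \(\f(u)\) has distance at most \(2N^2+1\) to \(v\) ``through the \(g\)-route.'' This holds for \(\g(u)\) but is false for \(\f(u)\). The paths to \(g\) are attached only to \(\g(\cdot)\) vertices, so \(\f(u)\) reaches \(g\) only through \(u\), and its \(g\)-route to \(v\) has length \(2N^2+3\). Hence the criterion ``\(\dist(u,v)=2N^2+2\) implies mutually maximally distant'' is wrong. This wrong criterion is exactly what produces the spurious edges you flag as the main obstacle. Your proposed remedy would also fail. Pairs sharing no portal really are at distance \(2N^2+2\), since any \(g\)-free route between them passes through two portals and has length roughly \(4N^2-4(n+m)N\). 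As written, the proposal leaves the decisive case open and even suggests the identity might be false.

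\textbf{The missing idea.} This is the content of the paper's Claim~\ref{claim:two-portals-in-paths}: look at \(\f(v)\) instead of trying to shorten \(\dist(u,v)\). Suppose \(u,v\) share no portal. Every route from \(u\) to \(\f(v)\) avoiding \(v\) must enter \(\f(v)\) along a portal path. Before that it must pass through another portal or through \(g\) and some other gadget, so it has length about \(4N^2\). Therefore \(\dist(u,\f(v))=\dist(u,v)+1=2N^2+3\). So a shortest \(u\)--\(\f(v)\) path contains \(v\), and \(v\) is not maximally distant from \(u\).

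The correct characterization is that \(u,v\in V_c\) are mutually maximally distant iff the \(g\)-route ties with a single-portal route. This happens iff they sit on opposite sides of a common portal with equal indices. In that case the portal gives \(\dist(\f(u),v)=2N^2+1\), so both neighbours of \(u\) are closer to \(v\) than \(u\) is, and symmetrically for \(v\).

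With this correction, your portal-by-portal matching goes through and yields \(E(G_{SR}[V_c])=E(H)\). Note one adjustment: for pairs on the same side of a portal (within a part, or \(P_T\) versus \(P_F\) at \(\alpha^{P,Q},\beta^{P,Q}\)), the signs are equal, not opposite. The relevant length is then \(2N^2+2-(a+b)N\) rather than \(2N^2+2-|a-b|N\), which still gives a non-edge.
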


We dedicate rest of the section to prove the above lemma.
In fact, we prove a stronger statement. 
We prove that graphs \(H\) and \(G_{SR}[V_c]\) are identical. 
This, and the facts that \(V_s\) are isolated vertices in \(G_{SR}\)
and every vertex cover of \(G_{SR}\) contains \(V_p \setminus \{p\}\)
proves the above lemma.

By the construction, it is easy to see that \(V(H) = V(G) \setminus (V_p \cup V_s)\).
We argue that \(E(H) = E(G_{SR}[V_c])\).
In order to prove this, we argue that for any two vertices \(u, v\) in \(H\),
edge \(uv\) is in \(H\) if and only if vertices \(u, v\) are mutually 
maximally distant in \(G\). See Definition~\ref{def:mutually-max-distant}.

Before proceeding, recall that \(V_c\) is partitioned
into the following twelve parts:
\(X_T\), \(X_F\), \(Y_T\), \(Y_F\), \(Z_T\), 
\(Z_F\),
\(P_T\), \(P_F\), \(Q_T\), \(Q_F\), \(R_T\), and \(R_F\).
These parts are highly symmetric, and hence, we can argue about the presence or absence of edges across these parts 
by considering the cases mentioned in Table~\ref{tabel:exhaustive-cases}.
We defined \(S\) as the collection of vertices in \(V(G)\) which are in portals. 
As mentioned before \(G - (S \cup \{g\})\) 
is the collection of disjoint subdivided stars, whose center corresponds to vertices added while encoding variables or clauses or pendant vertices attached to portal vertices.
This implies in graph \(G - \{g\}\), any path connecting variable or clause gadgets must pass through portal vertices \(S\).
Hence, the following cases 
are exhaustive for any pair of vertices
\(u, v\) in \(V_c\):
\begin{enumerate}[nolistsep]
\item Every path from \(u\) to \(v\) in 
\(G - \{g\}\) contains at least two vertices from (different) portals.
We handle this case in Claim~\ref{claim:two-portals-in-paths} and show that \(u\) and \(v\) are not mutually maximally distant in \(G\).
This implies that there is no edge with endpoints \(u, v\) in \(G_{SR}\).
Note that by construction of \(H\), there is no edge with endpoints \(u, v\).
\item  Every path from \(u\) to \(v\) in \(G - \{g\}\) contains exactly one vertex from a portal.
In Claim~\ref{claim:Xt-Xt} we handle the case when \(u\) and \(v\) are in the same part, and show that \(u\) and \(v\) are mutually maximally distant in \(G\).
This implies that there is an edge with endpoints \(u, v\) in \(G_{SR}\).
Note that by construction of \(H\), every
part is an independent set.
We handle the remaining cases in Claim~\ref{claim:edges-in-H} and show that green, red, and purple edges in \(H\) are precisely the edges in \(G_{SR}\).
\end{enumerate}

\begin{table}[t]
\centering
\begin{tabular}{p{0.03\textwidth}|p{0.14\textwidth}p{0.14\textwidth}p{0.1\textwidth}p{0.14\textwidth}p{0.13\textwidth}p{0.13\textwidth}}
\toprule
 & \(X_T\) & \(X_F\) & \(Y_T/Y_F\)/ \(Z_T/Z_F\) & \(P_T\) & \(P_F\) & \(Q_T/Q_F\)/ \(R_T/R_F\) \\ 
\midrule
\(X_T\) & \cellcolor{gray!20} Claim~\ref{claim:Xt-Xt} & Claim~\ref{claim:edges-in-H}(\ref{item:green-edges})  & \cellcolor{gray!20} Claim~\ref{claim:two-portals-in-paths}  & Claim~\ref{claim:edges-in-H}(\ref{item:red-edges}) & \cellcolor{gray!20} Claim~\ref{claim:two-portals-in-paths} & \cellcolor{gray!20} Claim~\ref{claim:two-portals-in-paths} \\
\midrule
\(P_T\) & Claim~\ref{claim:edges-in-H}(\ref{item:red-edges})  & \cellcolor{gray!20} Claim~\ref{claim:two-portals-in-paths} & \cellcolor{gray!20} Claim~\ref{claim:two-portals-in-paths} & \cellcolor{gray!20} Claim~\ref{claim:Xt-Xt} & \cellcolor{gray!20} Claim~\ref{claim:edges-in-H}(\ref{item:no-edges}) & Claim~\ref{claim:edges-in-H}(\ref{item:purple-edges}) \\
\bottomrule
\end{tabular}
\caption{The entry in the cell arguments about the presence or absence of 
edges across the set corresponding to its row and column.
Sets \(X_T\) and \(P_T\) are representative of the sets on variable side
and of the sets on clause side respectively.
The gray cells denote the absence of edges
across the corresponding sets.\label{tabel:exhaustive-cases}}
\vspace{-0.75cm}
\end{table}


\begin{claim}\label{claim:two-portals-in-paths} 
Consider two vertices \(u\) and \(v\) in \(V_c\) such that 
every path from \(u\) to \(v\) in \(G - \{g\}\) contains
at least two vertices from (different) portals. 
Then \(u\) and \(v\) are \emph{not} mutually maximally distant in \(G\).
\end{claim}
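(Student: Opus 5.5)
Since \(u\) and \(v\) are mutually maximally distant exactly when neither has a neighbour strictly farther from the other, it suffices to exhibit a neighbour of \(u\) (or of \(v\)) that is strictly farther from \(v\) (resp.\ \(u\)). Our plan is to show that \(\dist_G(u,v) = 2N^2+2\), realised by the route \(u \to \g(u) \to g \to \g(v) \to v\). Then we take the neighbour \(\f(u)\) of \(u\) and check that \(\dist_G(\f(u), v) = 2N^2+3 > \dist_G(u,v)\). This shows \(u\) is not maximally distant from \(v\), which is enough.

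\textbf{Upper bound.} The path through \(g\) has length \(1 + N^2 + N^2 + 1 = 2N^2+2\), so \(\dist_G(u,v) \le 2N^2+2\).

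\textbf{Lower bound.} Consider any \(u\)–\(v\) path. There are two cases.
\begin{itemize}
\item If the path avoids \(g\), then by hypothesis it passes through two portal vertices lying in different portals. Every path between \(\f(w)\) (for \(w \in V_c\)) and a portal vertex has length at least \(N^2 - nN\) or \(N^2 - mN\), hence at least \(N^2 - (n+m)N\). Every path between two distinct portals must go through some gadget \(\f(w)\), so it costs at least \(2(N^2-(n+m)N)\). Adding the segments from \(u\) to its first portal and from the last portal to \(v\), each at least \(N^2-(n+m)N+1\), the total is at least \(4N^2 - 4(n+m)N\). With \(N=(n+m)^2\) this exceeds \(2N^2+3\).
\item If the path uses \(g\), it contains the segment \(u \ldots g \ldots v\), each half of which has length at least \(N^2+1\). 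We need to check that no shortcut from \(u\) to \(g\) is shorter than via \(\g(u)\). Any alternative route goes through \(\f(u)\) and a portal and then reaches \(g\) through some \(\g(w)\). It therefore has length at least \(N^2-(n+m)N + N^2\), which is still larger than \(N^2+1\).
\end{itemize}
Hence \(\dist_G(u,v) = 2N^2+2\).

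\textbf{The neighbour \(\f(u)\).} Apply the same analysis to \(\f(u)\) and \(v\). Through \(g\), the distance is \(1 + (N^2+2) + N^2 + 1\), that is, \(\dist(\f(u),g) = N^2+2\) via \(u\) and \(\g(u)\). Without \(g\), the route still needs two portals, so by the computation above it costs about \(4N^2 - O((n+m)N)\). Therefore \(\dist_G(\f(u),v) = 2N^2+3\), and \(\f(u)\) is a neighbour of \(u\) farther from \(v\) than \(u\) is.

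\textbf{Main obstacle.} The hard step is the lower-bound bookkeeping. We must confirm that every inter-portal traversal genuinely passes through some \(\f(w)\), so that it contributes two long arms. We must also confirm that the pendant and support structure on the portal paths creates no shortcuts; this holds because the pendants are leaves. Finally, we need the slack \(2N^2 - O((n+m)N)\) to dominate all offsets, which \(N=(n+m)^2\) guarantees.
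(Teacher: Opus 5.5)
Your proof is correct and follows essentially the same route as the paper. Both arguments show that every \(g\)-avoiding \(u\)--\(v\) path crosses two portals and so has length about \(4N^2-4(n+m)N\); hence the unique cheapest route goes through \(g\) with length \(2N^2+2\), and then an \(\f\)-neighbour is exhibited that lies strictly farther away. The only differences are cosmetic: you use \(\f(u)\) as the witness against \(u\) being maximally distant from \(v\) where the paper uses \(\f(v)\), and you verify \(\dist_G(\f(u),v)=2N^2+3\) explicitly where the paper merely asserts the corresponding inequality.
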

\begin{claimproof} For notational convenience, we present the argument for the case when \(u \in X_T\) and \(v \in P_F\); the remaining cases follow analogously.

Let \(u = x_{i,t}\) and \(v = x^{\lambda}_{j,f}\), where \(i,j \in [n]\) and \(\lambda \in [m]\).
By construction, every path in \(G - \{g\}\) between \(x_{i,t}\) and \(x^{\lambda}_{j,f}\) contains at least two portal vertices belonging to distinct portal gadgets.
Any such path must traverse segments whose lengths are at least of the form 
\(N^2 - iN\) and \(N^2 - \lambda N\).
Since \(i \le n\) and \(\lambda \le m\), we have
\(N^2 - iN \ge N^2 - (n+m)N\) and \(N^2 - \lambda N \ge N^2 - (n+m)N\).
Consequently, every path between 
\(x_{i,t}\) and \(x^{\lambda}_{j,f}\) in \(G - \{g\}\) has length at least \(4N^2 - 4(n+m)N\).

Now consider the path in \(G\) from \(x_{i,t}\) to \(\g(x_{i,t})\) to \(g\) to \(\g(x^{\lambda}_{j,f})\)
which finally reaches \(x^{\lambda}_{j,f}\).
By construction, the total length of this path is \(2N^2 + 2\).
Recall that \(N = \left( n+m \right)^2\).
Therefore, any \(n+m \ge 2\), we have
\(2N^2 + 2 < 4N^2 - 4(n+m)N\),
and hence the shortest path between \(x_{i,t}\) and \(x^{\lambda}_{j,f}\) in \(G\) 
necessarily uses the vertex \(g\).
In particular, \(\dist_G(x_{i,t}, \g(x^{\lambda}_{j,f})) 
< \dist_G(x_{i,t}, x^{\lambda}_{j,f}) 
< \dist_G(x_{i,t}, \f(x^{\lambda}_{j,f}))\).
It follows that \(v = x^{\lambda}_{j,f}\) is not maximally distant from \(u = x_{i,t}\).
Hence, \(u\) and \(v\) are not mutually maximally distant in \(G\).

The same reasoning applies to any other pair of vertices satisfying 
the hypothesis of the claim.
\end{claimproof}

In the next claim, we argue about the 
non-existence of an edge
within the same part.

\begin{claim}\label{claim:Xt-Xt} 
Consider two vertices \(u\) and \(v\) that are in the same part of \(V_c\)
(and hence there exists a path from \(u\) to \(v\) in \(G - \{g\}\) that contains exactly 
one portal vertex).
Then, \(u\) and \(v\) are \emph{not} mutually maximally distant in \(G\).
\end{claim}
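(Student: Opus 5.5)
We argue for a representative case, \(u = x_{i,t}\) and \(v = x_{j,t}\) with \(i \neq j\), both in \(X_T\); the other parts, including the clause-side parts such as \(P_T\), are handled the same way. The plan is to show that \(\dist_G(u,v) < 2N^2+2\) by exhibiting a short path through one portal. Then \(v\) is not maximally distant from \(u\): the neighbour \(\g(v)\) is strictly farther from \(u\), since the long route through \(g\) becomes the relevant comparison.

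\textbf{Step 1: the portal route is short.} Vertices \(x_{i,t}\) and \(x_{j,t}\) both attach to the portal \(\alpha^{X_T,X_F}\), via paths from \(\f(\cdot)\) of lengths \(N^2+iN\) and \(N^2+jN\). They also both attach to \(\beta^{X_T,X_F}\), via paths of lengths \(N^2-iN\) and \(N^2-jN\). Routing through \(\beta\) therefore gives
\[
\dist_G(x_{i,t},x_{j,t}) \le 2 + 2N^2 - (i+j)N \le 2N^2+2-2N .
\]
This is strictly less than \(2N^2+2\), the length of the route \(u,\g(u),\dots,g,\dots,\g(v),v\). Same-sign offsets never cancel, so this step always works.

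\textbf{Step 2: locate the shortest paths.} We next identify \(\dist_G(u,\g(v))\) and \(\dist_G(u,\f(v))\). The claim is that every shortest path from \(u\) to \(\g(v)\) enters \(v\)'s star through \(\f(v)\) and then passes \(v\). The only alternative entry into \(\g(v)\) is the \(g\)-path, which has length at least \(N^2+1+N^2 = 2N^2+1\), roughly. The portal route costs at most \(2N^2+3-2N\), which is smaller, so \(\dist_G(u,\g(v)) = \dist_G(u,v)+1 > \dist_G(u,v)\).

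\textbf{Step 3: conclude.} Since \(\g(v) \in N(v)\) and \(\dist_G(u,\g(v)) > \dist_G(u,v)\), \(v\) is not maximally distant from \(u\). Hence \(u\) and \(v\) are not mutually maximally distant, and \(uv\notin E(G_{SR})\).

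\textbf{Main obstacle.} The main difficulty is making Step 2 rigorous, namely ruling out a cheaper access to \(\g(v)\). Such access could come from the \(g\)-path, or from \(u\) going to \(g\) first and then down to \(\g(v)\); that route has length exactly \(2N^2+1\). We must verify that
\[
2N^2+1 > 2 + 2N^2 - (i+j)N + 1,
\]
which holds because \((i+j)N \ge 3N > 2\).

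For clause-side parts such as \(P_T\), the vertices \(x^{\lambda}_{i,t}\) and \(x^{\lambda'}_{i',t}\) may share several portals, for instance \(\beta^{X_T,P_T}\) with lengths \(N^2+iN\). In that case we instead pick a portal where both have the \(-\) offset, such as \(\alpha^{X_T,P_T}\) or \(\alpha^{P,Q}/\beta^{R,P}\) depending on the sign convention. We expect checking that such a doubly-negative portal always exists for every part to be the fiddly bookkeeping. If none exists with both signs negative, a mixed pair still gives length \(2N^2+2 \pm (a-b)N\). For distinct indices this is below \(2N^2+2\) in at least one of the two portal ends, since the \(\alpha\) and \(\beta\) offsets are opposite. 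That suffices for Steps 2 and 3, provided the indices differ; if the indices coincide, the vertices are distinct clause occurrences with different \(\lambda\), and we use the clause portal instead.
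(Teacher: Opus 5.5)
Your proposal is correct and follows the same route as the paper. You use a portal end at which both vertices attach with a negative offset (you use \(\beta^{X_T,X_F}\), the paper the \(X_T\)--\(P_T\) portal), which gives \(\dist_G(u,v)\le 2N^2+2-(i+j)N<2N^2+2\), so \(\g(v)\) is strictly farther from \(u\) than \(v\) is. Your Step~2 (via \(\dist_G(u,g)\ge N^2+1\)) also makes explicit the equality \(\dist_G(u,\g(v))=\dist_G(u,v)+1\), which the paper only asserts.

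One harmless slip: for \(P_T\), the negative-offset ends of the clause portals are \(\beta^{P,Q}\) and \(\alpha^{R,P}\), not \(\alpha^{P,Q}\) and \(\beta^{R,P}\). This does not matter, because \(\alpha^{X_T,P_T}\) already works for every pair in \(P_T\), including pairs with equal variable index. So a doubly negative portal end always exists, and your mixed-pair fallback is never needed.
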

\begin{claimproof}
We first treat the case \(u, v \in X_T\), and then explain how the argument extends to any part of \(V_c\).

Assume that \(u, v \in X_T\). Then \(u = x_{i,t}\) and \(v = x_{j,t}\) for some \(i \neq j \in [n]\).
By the construction of \(H\), the vertices \(x_{i,t}\) and \(x_{j,t}\) are non-adjacent in \(H\).
Hence, it suffices to prove that \(x_{i,t}\) and \(x_{j,t}\) are not maximally distant from each other.

Consider any path starting at \(x_{i,t}\) and ending at \(x_{j,t}\).
By construction, every such path contains either a portal vertex in \(S\) or the semi-global vertex \(g\).
Recall that \(\g(x_{i,t})\) and \(\f(x_{i,t})\) are the only two neighbors of \(x_{i,t}\), and the analogous 
statement holds for \(x_{j,t}\).
Moreover, both \(\f(x_{i,t})\) and \(\f(x_{j,t})\) are adjacent to portal vertices in 
\(\{\alpha^{X_T,P_T}, \beta^{X_T,P_T}\}\) and 
\(\{\alpha^{X_T,X_F}, \beta^{X_T,X_F}\}\).
Consider a path that connects \(x_{i,t}\) to \(x_{j,t}\) via
\(\f(x_{i,t})\), then a portal vertex \(\alpha^{X_T, P_T}\)(or \(\beta^{X_T, P_T}\)), and
subsequently via \(\f(x_{j,t})\).
By construction, the length of this path is \(2N^2 - (i + j)N + 2\).
Now consider the alternative path from \(x_{i,t}\) to \(x_{j,t}\) that proceeds via
\(\g(x_{i,t})\), then the semi-global vertex \(g\), and subsequently via \(\g(x_{j,t})\).
The length of this path is \(2N^2 + 2\).
Since \(i, j \in [n]\), we have \(2N^2 - (i + j)N + 2 < 2N^2 + 2\).
Therefore, for every \(i, j \in [n]\), an isometric path between \(x_{i,t}\) and \(x_{j,t}\) has 
length \(2N^2 - (i + j)N + 2\) and passes through a portal vertex.
In particular, \(\dist_G(x_{i,t}, \f(x_{j,t}))  <
\dist_G(x_{i,t}, x_{j,t}) < \dist_G(x_{i,t}, \g(x_{j,t}))\).
Hence, \(u = x_{j,t}\) is not maximally distant from \(v = x_{i,t}\).

We now explain how this argument generalizes to any part of \(V_c\).
In the above reasoning, we used that when \(u, v \in X_T\), the distance from a 
variable-encoding vertex to at least one portal vertex is strictly smaller than \(N^2 + 1\).
By construction, this property holds for every part of \(V_c\).
Formally, for any part of \(V_c\), there exists at least one portal vertex (either of \(\alpha\)-type or \(\beta\)-type)
adjacent to that part whose distance from the corresponding variable-encoding or 
clause-encoding vertex is strictly smaller than \(N^2 + 1\).
Consequently, for any two vertices contained in the same part of \(V_c\), the shortest path 
between them passes through a portal vertex and is strictly shorter than any 
path routed via the semi-global vertex \(g\).
Therefore, the statement holds for any pair of vertices belonging to the same part of \(V_c\).
\end{claimproof}

In the final claim, we consider the remaining cases of possible parts in which vertices \(u\) and \(v\) are present.

\begin{claim}\label{claim:edges-in-H} 
Consider two vertices \(u\) and \(v\) that do not belong to the same part in \(V_c\),
and there exists a path from \(u\) to \(v\) in \(G - \{g\}\) that contains exactly 
one portal vertex. 
Then \(u\) and \(v\) are mutually maximally distant in \(G\) if and only if
they are adjacent in \(H\).
\end{claim}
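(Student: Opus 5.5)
We prove Claim~\ref{claim:edges-in-H} by a distance computation at a single portal. The pairs of parts covered are those in Table~\ref{tabel:exhaustive-cases} labelled Claim~\ref{claim:edges-in-H}: \(X_T\)--\(X_F\) (green), \(X_T\)--\(P_T\) (red), \(P_T\)--\(P_F\) (no edges), and \(P_T\)--\(Q_T/Q_F\) (purple), together with their symmetric analogues. For each such pair there is exactly one portal \((\alpha^{A,B},\beta^{A,B})\) joining the two parts. Let \(a_u\) and \(b_u\) denote the lengths of the paths from \(\f(u)\) to \(\alpha^{A,B}\) and \(\beta^{A,B}\), and define \(a_v,b_v\) likewise. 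By construction \(a_u+b_u=a_v+b_v=2N^2\), and the two signs of the index offsets are opposite on the two sides. Hence
\[
\dist_G(u,v)=\min\{\,2N^2+2,\; a_u+a_v+2,\; b_u+b_v+2\,\}=2N^2+2-N\,|c_u-c_v|,
\]
where \(c_u,c_v\) are the relevant indices (variable indices for green and red edges, clause indices for purple edges). Routes through two or more portals are discarded as in Claim~\ref{claim:two-portals-in-paths}, since they have length at least \(4N^2-4(n+m)N>2N^2+2\).

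The key step is the criterion
\[
u,v \text{ are mutually maximally distant} \iff \dist_G(u,v)=2N^2+2, \text{ i.e. } c_u=c_v.
\]

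\emph{Forward direction.} Suppose \(c_u\neq c_v\). Then every shortest \(u\)--\(v\) path passes through \(\f(v)\) and not through \(\g(v)\), because the route via \(g\) is strictly longer. So
\[
\dist(u,\g(v))\le \dist(u,\f(v))+2=\dist(u,v)+1 .
\]
The right bound here is actually that \(\g(v)\) is reached via \(g\), at distance \(\min(2N^2+1,\dots)\). I would check directly that \(\dist(u,\g(v))=\dist(u,v)+1\), which exceeds \(\dist(u,v)\). Exactly as in Claim~\ref{claim:Xt-Xt}, this shows \(v\) is not maximally distant from \(u\).

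\emph{Reverse direction.} Suppose \(c_u=c_v\), so \(\dist(u,v)=2N^2+2\). The only neighbours of \(v\) are \(\g(v)\) and \(\f(v)\), and each lies at distance \(2N^2+1\) from \(u\): via \(g\) for \(\g(v)\), and via the matched portal for \(\f(v)\). Hence \(v\) is maximally distant from \(u\), and by symmetry \(u\) is maximally distant from \(v\).

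It then remains to match the criterion \(c_u=c_v\) with adjacency in \(H\), case by case.
\begin{itemize}
\item \emph{Green.} \(x_{i,t}\) and \(x_{j,f}\) match iff \(i=j\), which gives exactly the matching edges of \(H\).
\item \emph{Red.} \(x_{i,t}\) and \(x^{\lambda}_{j,t}\) match iff \(i=j\), which gives exactly the occurrence edges.
\item \emph{Purple.} \(x^{\lambda}_{i,\cdot}\) and \(y^{\lambda'}_{j,\cdot}\) match iff \(\lambda=\lambda'\), i.e. iff they belong to the same clause triangle.
\item \emph{\(P_T\)--\(P_F\).} For \(u=x^{\lambda}_{i,t}\) and \(v=x^{\lambda'}_{j,f}\), both sides attach to \(\alpha^{P,Q}\) with the same sign \(+\lambda N\), and also to \(\alpha^{R,P}\) with the same sign. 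In \(G-\{g\}\) a path through one portal therefore has length \(2N^2+(\lambda+\lambda')N+2\) via \(\beta\) or \(2N^2-(\lambda+\lambda')N+2\) via \(\alpha\), which is less than \(2N^2+2\). The argument of Claim~\ref{claim:Xt-Xt} then shows the pair is not mutually maximally distant, matching the absence of such edges in \(H\), since two literals of one clause come from different variable sets.
\end{itemize}

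The main obstacle I expect is verifying that no mixed or unexpected shortcut spoils the exact distance \(2N^2+2\) in the matched case. Candidate shortcuts are paths that enter a portal, leave through another vertex \(\f(w)\), and continue from there, or paths that reach \(\f(v)\) through a different portal (for instance \(x^\lambda\) has \(\f\) attached to both \(P,Q\) and \(R,P\) portals). The plan is to bound every such route using that each portal hop costs at least \(2N^2-2(n+m)N\), which makes any route with two hops too long since \(N=(n+m)^2\).
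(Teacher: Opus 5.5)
Your proposal is correct and follows essentially the same route as the paper. Both compare the route through $g$ (length $2N^2+2$) with the route through the single shared portal (length $2N^2+2-N|c_u-c_v|$, or shorter still in the same-sign $P_T$--$P_F$ case), discard routes through two or more portals, and match $c_u=c_v$ with the green, red and purple edges of $H$.

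The step you left muddled in the forward direction is easily closed. The vertex $\g(v)$ can be entered only from $v$ or along the length-$N^2$ path from $g$, and $\dist(u,g)=N^2+1$, so $\dist(u,\g(v))=\min\{\dist(u,v)+1,\,2N^2+1\}$. When $c_u\neq c_v$ we have $\dist(u,v)\le 2N^2+2-N$, so this minimum equals $\dist(u,v)+1$, which is exactly the inequality non-maximality requires.
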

\begin{claimproof}
We distinguish the possible cases according to the parts containing \(u\) and \(v\).

\begin{enumerate}[nolistsep]
\item\label{item:green-edges} (Green Edges) We first treat the case \(u \in X_T\) and \(v \in X_F\).
\end{enumerate}
Let \(u = x_{i,t}\) and \(v = x_{j,f}\) for some \(i,j \in [n]\).
Recall that the only neighbors of \(x_{j,f}\) are \(\g(x_{j,f})\) and \(\f(x_{j,f})\).
Hence, any shortest path from \(x_{i,t}\) to \(x_{j,f}\) must pass through one of these two vertices.
First, consider the path from \(x_{i,t}\) to \(\g(x_{j,f})\) via the semi-global vertex \(g\).
By construction, this path has length \(2N^2 + 1\).
Next, consider the paths from \(x_{i,t}\) to \(\f(x_{j,f})\) through the portal vertex \(\alpha^{X_T,X_F}\).
The length of this path equals
\(1 + (N^2 + iN) + (N^2 - jN) = 2N^2 + 1 + (i-j)N\).
Similarly, the path through the portal vertices \(\beta^{X_T,X_F}\) has length
\(1 + (N^2 - iN) + (N^2 + jN) = 2N^2 + 1 - (i-j)N\).
Consequently, the shortest path from \(x_{i,t}\) to \(\f(x_{j,f})\) via a portal has length
\(2N^2 + 1 - |i-j|N\).
In particular, this value equals \(2N^2 + 1\) if and only if \(i = j\).

We now compare the distances.
If \(i \neq j\), then
\(2N^2 + 1 - |i-j|N < 2N^2 + 1\),
and hence every shortest path from \(x_{i,t}\) to \(x_{j,f}\) goes through \(\f(x_{j,f})\).
In this case,
\(\dist_G(x_{i,t}, \f(x_{j,f})) < \dist_G(x_{i,t}, \g(x_{j,f}))\),
and therefore \(x_{j,f}\) is not maximally distant from \(x_{i,t}\).
On the other hand, if \(i = j\), then both routes (via \(g\) and via the portal)
have equal length \(2N^2 + 1\).
Hence,
\(\dist_G(x_{i,t}, \f(x_{j,f})) = \dist_G(x_{i,t}, \g(x_{j,f}))\),
and thus \(x_{j,f}\) is maximally distant from \(x_{i,t}\).
By symmetry of the construction, the same holds in the reverse direction,
and therefore \(x_{i,t}\) and \(x_{j,f}\) are mutually maximally distant if and only if \(i = j\).
By the definition of \(H\), we add an edge \(x_{i,t}x_{j,f}\) precisely when \(i = j\).
Therefore, in the case \(u \in X_T\) and \(v \in X_F\),
the vertices \(u\) and \(v\) are mutually maximally distant in \(G\) if and only if
they are adjacent in \(H\).

The cases \(u \in Y_T\) and \(v \in Y_F\), as well as
\(u \in Z_T\) and \(v \in Z_F\), follow by identical arguments.

\begin{enumerate}[resume,nolistsep]
\item\label{item:red-edges} (Red Edges) 
We consider the case when \(u \in X_T\) and \(v \in P_T\).
\end{enumerate}
The proof proceeds analogously to the case \(u \in X_T\) and \(v \in X_F\),
with the only difference being that the relevant portal vertices are
\(\alpha^{X_T,P_T}\) and \(\beta^{X_T,P_T}\).

Let \(u = x_{i,t}\) and \(v = x^{\lambda}_{j,t}\), where \(i, j \in [n]\) and \(\lambda \in [m]\).
As in the previous case, the only neighbors of \(x^{\lambda}_{j,t}\) are
\(\g(x^{\lambda}_{j,t})\) and \(\f(x^{\lambda}_{j,t})\).
Hence, every shortest path from \(x_{i,t}\) to \(x^{\lambda}_{j,t}\)
must pass through one of these two vertices.
A direct computation of the lengths of the paths via
\(\alpha^{X_T,P_T}\) and \(\beta^{X_T,P_T}\)
shows that the shortest portal path has length
\(2N^2 + 1 - |i - j|N\),
whereas the path via the semi-global vertex \(g\)
has length \(2N^2 + 1\).
We recall that the length of path connecting \(x^{\lambda}_{j,t}\)
to \(\alpha^{X_T,P_T}\) and \(\beta^{X_T,P_T}\) depends on \(j\) and not 
on \(\lambda\).
Consequently, these two values are equal if and only if \(i = j\).

Arguing exactly as before, we conclude that
\(x_{i,t}\) and \(x^{\lambda}_{j,t}\) are mutually maximally distant in \(G\)
if and only if \(i = j\).
By the construction of \(H\), this holds precisely when
\(x_{i,t}\) and \(x^{\lambda}_{j,t}\) are adjacent in \(H\).
The same argument extends verbatim to the remaining pairs of parts,
namely \((X_F,P_F)\), \((Y_T,Q_T)\), \((Y_F,Q_F)\),
\((Z_T,R_T)\), and \((Z_F,R_F)\).

\begin{enumerate}[resume,nolistsep]
\item\label{item:purple-edges} (Purple Edges) We consider the case when \(u \in P_T\) and \(v \in Q_T\). 
\end{enumerate}
The proof proceeds analogously to the previous cases 
with the only difference being that the relevant portal vertices are
\(\alpha^{P_T,Q_T}\) and \(\beta^{P_T,Q_T}\).

Let \(u = x^{\lambda}_{i,t}\) and \(v = y^{\lambda_1}_{j,t}\), where \(i, j \in [n]\) and \(\lambda, \lambda_1 \in [m]\).

A direct computation of the lengths of the paths via
\(\alpha^{P_T,Q_T}\) and \(\beta^{P_T,Q_T}\)
shows that the shortest portal path has length
\(2N^2 + 1 - |\lambda - \lambda_1|N\),
whereas the path via the semi-global vertex \(g\)
has length \(2N^2 + 1\).
We recall that the length of path connecting \(x^{\lambda}_{i,t}\)
to \(\alpha^{P_T,Q_T}\) and \(\beta^{P_T,Q_T}\) depends on \(\lambda\) and not 
on \(i\).
Consequently, these two values are equal if and only if \(\lambda = \lambda_1\).

Arguing exactly as before, we conclude that
\(x^{\lambda}_{i,t}\) and \(y^{\lambda}_{j,t}\) are mutually maximally distant in \(G\)
if and only if \(\lambda = \lambda_1\).
By the construction of \(H\), this holds precisely when
variables corresponding to \(x^{\lambda}_{i,t}\) and \(y^{\lambda_1}_{j,t}\) are in 
the same clause and hence are adjacent in \(H\).
The same argument extends to the remaining pairs of parts
which we group for better understanding:
\(\{(P_T, Q_F), (P_T, R_T), (P_T, R_F) \}\),
\(\{(P_F, Q_T), (P_F, Q_F), (P_F, R_T), (P_F, R_F) \}\),
\(\{(Q_T, R_T), (Q_T, R_F) \}\), and
\(\{(Q_F, R_T), (Q_F, R_F) \}\).

\begin{enumerate}[resume,nolistsep]
\item\label{item:no-edges} (No Edges) We consider the case when \(u \in P_T\) and \(v \in P_F\). 
\end{enumerate}
Let \(u = x^{\lambda}_{i,t}\) and \(v = x^{\lambda_1}_{j, f}\) where \(i, j \in [n]\) and
\( \lambda, \lambda_1 \in [m]\).
Note that the shortest path from \(u = x^{\lambda}_{i,t}\) to \(v = x^{\lambda_1}_{j, f}\)
via \(\beta^{P, Q}\) (or \(\alpha^{R, P}\)) is of length \(2N^2 + 2 - |\lambda + \lambda_1|N\).
See Table~\ref{table:lenghts-clause-to-clause}.
However, the path from  \(u = x^{\lambda}_{i,t}\) to \(v = x^{\lambda_1}_{j, f}\)
via semi-global vertex \(g\) is of length \(2N^2 + 2\).
Hence, for any \(\lambda, \lambda_1 \in [m]\), the path via portal vertex is smaller.
Hence, \(\dist_G(x^{\lambda}_{i,t}, \f(x^{\lambda_1}_{j,f}))  <
\dist_G(x^{\lambda}_{i,t}, x^{\lambda_1}_{j,f}) < \dist_G(x^{\lambda}_{i,t}, \g(x^{\lambda_1}_{j,f}))\).
Hence,  \(u = x^{\lambda}_{i,t}\) is not maximally distant from \(v = x^{\lambda_1}_{j,f}\).
Note that by the construction of \(H\), there is no edge with one endpoint in \(P_T\)
and another endpoint in \(P_Q\).

The similar argument follows when \(u \in Q_T\) and \(v \in Q_F\), 
and when \(u \in R_T\) and \(v \in R_F\).

This proves that any two vertices \(u\) and \(v\) that do not belong to the same part in \(V_c\),
and there exists a path from \(u\) to \(v\) in \(G - \{g\}\) that contains exactly 
one portal vertex are mutually maximally distant in \(G\) if and only if
they are adjacent in \(H\).
\end{claimproof}

The above three claims prove that vertices \(u, v\) in \(V(H) = V_c\) are adjacent in \(H\)
if and only if there are mutually maximal distant from each other, i.e., \(E(H) = E(G_{SR})\).
By Observation~\ref{obs:wlog-vertex-cover},
we can assume without loss of generality that
every vertex cover \(C \subseteq V(G_{SR})\) contains all vertices in \(V_p \setminus \{p\}\).
This implies that \(C \subseteq V(G_{SR})\) is a vertex cover of \(G_{SR}\) if and only if 
\(C \setminus V_p\) is a vertex cover of \(H\),
which concludes the proof of Lemma~\ref{lemma:vertex-cover-two-graphs}.

\subsection{Correctness of the reduction and Proof of Theorem~\ref{thm:np-hardness-fvs}}

We are now in the position to prove that the reduction mentioned in Section~\ref{subsec:3SAT-to-SMD}
is correct.
By Proposition~\ref{prop:smd-vc}, the instance \((G, k)\) is a yes-instance of 
\textsc{Strong Metric Dimension} if and only if \((G_{SR}, k)\) is a yes-instance 
of \textsc{Vertex Cover}. 
By Lemma~\ref{lemma:Ex-3-Par-3-SAT-to-VC}, 
\((\psi, \langle X, Y, Z\rangle)\) is a yes-instance of \textsc{Exact-3-Partitioned-3-SAT} 
if and only if \((H, 3n + 2m)\) is a yes-instance of \textsc{Vertex Cover}. 
By Lemma~\ref{lemma:vertex-cover-two-graphs}, \((G_{SR}, k)\) 
is a yes-instance of \textsc{Vertex Cover} if and only if \((H, 3n + 2m)\) is a yes-instance
of \textsc{Vertex Cover}.
Combining these equivalences, we obtain the following lemma.

\begin{lemma}\label{lemma:3SAT-to-SMD}
$(\psi, \langle X, Y, Z\rangle)$ is a yes-instance of \textsc{Exact-3-Partitioned-3-SAT}
if and only if $(G,k)$ is a yes-instance of \textsc{Strong Metric Dimension}.  
\end{lemma}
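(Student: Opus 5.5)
The plan is to chain the three equivalences already established, exactly as in the diagram of Figure~\ref{fig:overview-strategy}, and then reconcile the budgets. By Proposition~\ref{prop:smd-vc}, \(\smd(G)=\vc(G_{SR})\), so \((G,k)\) is a yes-instance of \textsc{Strong Metric Dimension} if and only if \(\vc(G_{SR})\le k\). By Lemma~\ref{lemma:Ex-3-Par-3-SAT-to-VC}, \((\psi,\langle X,Y,Z\rangle)\) is satisfiable if and only if \(\vc(H)\le 3n+2m\). It then remains to show \(\vc(G_{SR})\le k\) if and only if \(\vc(H)\le 3n+2m\), where \(k=3n+2m+|V_p|-1\).

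For this last step I would make the structure of \(G_{SR}\) explicit, using the claims of Section~\ref{subsec:VC-equiv}. By Observation~\ref{obs:pendant-in-G-to-clique-in-GSR}, \(V_p\) is a clique and \(V_s\) consists of isolated vertices. By Claims~\ref{claim:two-portals-in-paths}, \ref{claim:Xt-Xt} and~\ref{claim:edges-in-H}, \(G_{SR}[V_c]=H\). By the proof of Observation~\ref{obs:wlog-vertex-cover}, the pendant \(p\) attached to \(g\) has no neighbour in \(V_c\).

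The remaining edges of \(G_{SR}\) are those between \(V_p\setminus\{p\}\) and \(V_c\). Each of them is covered by taking all of \(V_p\setminus\{p\}\) into the cover, so their exact structure does not matter.

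\emph{Forward direction.} Given a vertex cover \(C_H\) of \(H\) of size at most \(3n+2m\), the set \(C_H\cup (V_p\setminus\{p\})\) covers every edge of \(G_{SR}\). Indeed, edges inside \(V_c\) are covered by \(C_H\), and every other edge has an endpoint in \(V_p\setminus\{p\}\), since \(V_s\) is isolated and \(p\) has no neighbour outside \(V_p\). This cover has size at most \(k\).

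\emph{Backward direction.} Let \(C\) be a vertex cover of \(G_{SR}\) with \(|C|\le k\). Since \(V_p\) is a clique, \(C\) contains at least \(|V_p|-1\) vertices of \(V_p\). Then \(C\cap V_c\) is a vertex cover of \(G_{SR}[V_c]=H\), and its size is at most \(|C|-(|V_p|-1)\le 3n+2m\). Here I must not rely on the ``without loss of generality'' phrasing of Observation~\ref{obs:wlog-vertex-cover}: the clique bound alone gives the count, so no exchange argument is needed. Equivalently, Lemma~\ref{lemma:vertex-cover-two-graphs} could be invoked after such a normalisation.

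The main obstacle is not in this combination but in trusting that \(G_{SR}[V_c]\) really equals \(H\), including the claim that no edges join \(V_s\) or \(p\) to \(V_c\). Since those facts are stated earlier, I would simply cite them. I would also briefly note that \(G\) is connected and has at least three vertices, so that Proposition~\ref{prop:smd-vc} and Observation~\ref{obs:pendant-in-G-to-clique-in-GSR} apply.
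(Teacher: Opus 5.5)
Your proposal is correct and follows the paper's route: the paper likewise chains Proposition~\ref{prop:smd-vc}, Lemma~\ref{lemma:Ex-3-Par-3-SAT-to-VC} and Lemma~\ref{lemma:vertex-cover-two-graphs}, relying on the same structural facts ($G_{SR}[V_c]=H$, the clique on $V_p$, isolated $V_s$, and $p$ having no neighbour in $V_c$). Your explicit budget reconciliation bounds $|C\cap V_c|$ directly by the clique count on $V_p$. This is a cleaner substitute for the ``without loss of generality'' normalisation of Observation~\ref{obs:wlog-vertex-cover}, and it correctly works with $C\cap V_c$ rather than $C\setminus(V_p\setminus\{p\})$, which may still contain $p$ or vertices of $V_s$.
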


It is easy to verify that all the reductions  mentioned in this section take 
the time polynomial in the size of input.
Also, as specified in Section~\ref{subsec:3SAT-to-SMD},
the feedback vertex set number of \(G\) is \(25\) whereas its
pathwidth is at most \(27\).
This concludes the proof of Theorem~\ref{thm:np-hardness-fvs}.
\section{Conclusion}

Foucaud et al.~\cite{DBLP:conf/icalp/FoucaudGK0IST24} showed that 
\mdfull, \gsfull, and \smdfull\ are \NP-complete problems that admit
a double-exponential lower bound and a matching algorithm
when parameterized by treewidth plus diameter and by the vertex cover number
of the graph.
They justify the study of these parameters by highlighting the fact that
\mdfull\ and \gsfull\ remain \NP-complete even when restricted to graphs of
diameter two or to graphs of constant treewidth.
However, such results were not previously known for \smdfull.
In this article, we bridge this gap by proving that \smdfull\ is \NP-complete
even when restricted to graphs of diameter two, as well as when restricted to
graphs of constant pathwidth plus feedback vertex set number.

Our results for \smdfull\ are analogous to those obtained for \gsfull\
in~\cite{DBLP:conf/iwpec/Tale25}.
This raises the question whether a similar result holds for \mdfull.
More precisely, is \mdfull\ \NP-complete even on graphs of constant
pathwidth plus feedback vertex set number?
If true, this would strengthen the result of~\cite{GKIST23}, which shows that
the problem is \W[1]-hard when parameterized by this combined parameter.


\bibliography{bib}

@inproceedings{DBLP:conf/stacs/FoucaudGK0IST25,
  author       = {F. Foucaud and
                  E. Galby and
                  L. Khazaliya and
                  S. Li and
                  F. Mc Inerney and
                  R. Sharma and
                  P. Tale},
  title        = {Metric Dimension and Geodetic Set Parameterized by Vertex Cover},
  booktitle    = {42nd International Symposium on Theoretical Aspects of Computer Science,
                  {STACS}},
  series       = {LIPIcs},
  pages        = {33:1--33:20},
  year         = {2025}
}

@article{Courcelle90,
  author  = {B. Courcelle},
  title   = {The Monadic Second-Order Logic of Graphs. {I.} Recognizable Sets of
             Finite Graphs},
  journal = {Inf. Comput.},
  volume  = {85},
  number  = {1},
  pages   = {12--75},
  year    = {1990}
}

@book{D12,
  author    = {R. Diestel},
  title     = {Graph Theory, 4th Edition},
  series    = {Graduate texts in mathematics},
  volume    = {173},
  publisher = {Springer},
  year      = {2012}
}

@book{DBLP:books/daglib/0015106,
  author    = {J. M. Kleinberg and
               E. Tardos},
  title     = {Algorithm design},
  publisher = {Addison-Wesley},
  year      = {2006}
}

@book{DBLP:books/fm/GareyJ79,
  author    = {M. R. Garey and
               D. S. Johnson},
  title     = {Computers and Intractability: {A} Guide to the Theory of NP-Completeness},
  publisher = {W.H. Freeman and Company},
  year      = {1979},
  isbn      = {0-7167-1044-7},
  bibsource = {dblp computer science bibliography, https://dblp.org}
}

@book{DBLP:books/sp/CyganFKLMPPS15,
  author    = {M. Cygan and
               F. V. Fomin and
               L. Kowalik and
               D. Lokshtanov and
               D. Marx and
               M. Pilipczuk and
               M. Pilipczuk and
               S. Saurabh},
  title     = {Parameterized Algorithms},
  publisher = {Springer},
  year      = {2015}
}

@inproceedings{DBLP:conf/icalp/FoucaudGK0IST24,
  author    = {F. Foucaud and
               E. Galby and
               L. Khazaliya and
               S. Li and
               F. Mc Inerney and
               R. Sharma and
               P. Tale},
  title     = {Problems in {NP} Can Admit Double-Exponential Lower Bounds When Parameterized
               by Treewidth or Vertex Cover},
  booktitle = {51st International Colloquium on Automata, Languages, and Programming,
               {ICALP}},
  series    = {LIPIcs},
  volume    = {297},
  pages     = {66:1--66:19},
  year      = {2024},
  bibsource = {dblp computer science bibliography, https://dblp.org}
}

@inproceedings{DBLP:conf/iwpec/Tale25,
  author    = {P. Tale},
  title     = {Geodetic Set on Graphs of Constant Pathwidth and Feedback Vertex Set Number},
  booktitle = {20th International Symposium on Parameterized and Exact Computation, {IPEC}},
  series    = {LIPIcs},
  volume    = {358},
  pages     = {28:1--28:14},
  year      = {2025},
  bibsource = {dblp computer science bibliography, https://dblp.org}
}

@article{DBLP:journals/dam/DasGuptaM17,
  author  = {B. DasGupta and
             N. Mobasheri},
  title   = {On optimal approximability results for computing the strong metric
             dimension},
  journal = {Discrete Applied Math.},
  volume  = {221},
  pages   = {18--24},
  year    = {2017}
}

@article{DBLP:journals/dam/KuziakPRY18,
  author  = {D. Kuziak and
             M. L. Puertas and
             J. A. Rodr{\'{\i}}guez{-}Vel{\'{a}}zquez and
             I. G. Yero},
  title   = {Strong resolving graphs: The realization and the characterization
             problems},
  journal = {Discrete Applied Math.},
  volume  = {236},
  pages   = {270--287},
  year    = {2018}
}

@article{DBLP:journals/dam/OellermannP07,
  author  = {O. R. Oellermann and
             J. Peters{-}Fransen},
  title   = {The strong metric dimension of graphs and digraphs},
  journal = {Discrete Applied Math.},
  volume  = {155},
  number  = {3},
  pages   = {356--364},
  year    = {2007}
}

@article{DBLP:journals/siamdm/LampisMV25,
  author    = {Michael Lampis and
               Nikolaos Melissinos and
               Manolis Vasilakis},
  title     = {Parameterized Max Min Feedback Vertex Set},
  journal   = {{SIAM} J. Discret. Math.},
  volume    = {39},
  number    = {3},
  pages     = {1587--1620},
  year      = {2025},
  url       = {https://doi.org/10.1137/23m1605247},
  doi       = {10.1137/23M1605247},
  bibsource = {dblp computer science bibliography, https://dblp.org}
}

@inproceedings{floCALDAM20,
  author    = {D. Chakraborty and
               F. Foucaud and
               H. Gahlawat and
               S. K. Ghosh and
               B. Roy},
  title     = {Hardness and Approximation for the Geodetic Set Problem in Some Graph Classes},
  booktitle = {6th International Conference on Algorithms and Discrete Applied Mathematics ({CALDAM})},
  volume    = {12016},
  year      = {2020},
  pages     = {102--115}
}

@article{FoucaudMNPV17b,
  author  = {F. Foucaud and
             G. B. Mertzios and
             R. Naserasr and
             A. Parreau and
             P. Valicov},
  title   = {Identification, Location-Domination and Metric Dimension on Interval
             and Permutation Graphs. {II.} {A}lgorithms and Complexity},
  journal = {Algorithmica},
  volume  = {78},
  number  = {3},
  pages   = {914--944},
  year    = {2017}
}

@article{GKIST23,
  author  = {E. Galby and
             L. Khazaliya and
             F. {Mc~Inerney} and
             R. Sharma and
             P. Tale},
  title   = {Metric Dimension Parameterized by Feedback Vertex Set and Other Structural
             Parameters},
  journal = {SIAM J. Discrete Math.},
  volume  = {37},
  number  = {4},
  pages   = {2241--2264},
  year    = {2023}
}

@inproceedings{kuziak2014strong,
  title        = {Strong resolvability in product graphs},
  author       = {D. Kuziak},
  booktitle    = {1st URV Doctoral Workshop in Computer Science and Mathematics},
  pages        = {41},
  year         = {2014},
  organization = {Publicacions URV}
}

@article{LM21,
  author  = {S. Li and M. Pilipczuk},
  title   = {Hardness of Metric Dimension in Graphs of Constant Treewidth},
  journal = {Algorithmica},
  pages   = {3110--3155},
  year    = {2022},
  volume  = {84},
  number  = {11}
}

@article{sebo04,
  title   = {On Metric Generators of Graphs},
  author  = {A. Seb\H{o} and E. Tannier},
  journal = {Mathematics of Operations Research},
  volume  = {29},
  number  = {2},
  pages   = {383--393},
  year    = {2004}
}

\end{document}